\newcommand{\RP}{{\cal X}}
\newcommand{\Cov}{{\rm Cov}}
\newcommand{\Var}{{\rm Var}}
\newcommand{\Or}{\mathcal{O}}
\newcommand{\OR}{\Theta}
\newcommand{\Ai}{\mathrm{Ai}}
\newcommand{\Pb}{\mathbb{P}}
\newcommand{\E}{\mathbbm{E}}
\newcommand{\Id}{\mathbbm{1}}
\newcommand{\R}{\mathbb{R}}
\newcommand{\Z}{\mathbb{Z}}
\newcommand{\sgn}{\mathrm{sgn}}
\numberwithin{equation}{section}
\newtheorem{prop}{Proposition}[section]
\newtheorem{lem}[prop]{Lemma}
\newtheorem{cor}[prop]{Corollary}
\theoremstyle{definition}
\newtheorem{rem}[prop]{Remark}
\newtheorem{defin}[prop]{Definition}}
\begin{document}

\allowdisplaybreaks

\renewcommand{\thefootnote}{$\star$}

\newcommand{\arXivNumber}{1602.00486}

\renewcommand{\PaperNumber}{074}

\FirstPageHeading

\ShortArticleName{On Time Correlations for KPZ Growth in One Dimension}

\ArticleName{On Time Correlations for KPZ Growth\\ in One Dimension\footnote{This paper is a~contribution to the Special Issue
on Asymptotics and Universality in Random Matrices, Random Growth Processes, Integrable Systems and Statistical Physics in honor of Percy Deift and Craig Tracy. The full collection is available at \href{http://www.emis.de/journals/SIGMA/Deift-Tracy.html}{http://www.emis.de/journals/SIGMA/Deift-Tracy.html}}}

\Author{Patrik L.~FERRARI~$^\dag$ and Herbert SPOHN~$^\ddag$}

\AuthorNameForHeading{P.L.~Ferrari and H.~Spohn}

\Address{$^\dag$~Institute for Applied Mathematics, Bonn University,\\
\hphantom{$^\dag$}~Endenicher Allee 60, 53115 Bonn, Germany}
\EmailD{\href{mailto:ferrari@uni-bonn.de}{ferrari@uni-bonn.de}}

\Address{$^\ddag$~Zentrum Mathematik, TU M\"unchen, Boltzmannstrasse 3, D-85747 Garching, Germany}
\EmailD{\href{mailto:spohn@ma.tum.de}{spohn@ma.tum.de}}

\ArticleDates{Received March 17, 2016, in f\/inal form July 21, 2016; Published online July 26, 2016}

\Abstract{Time correlations for KPZ growth in $1+1$ dimensions are reconsidered. We discuss f\/lat, curved, and stationary initial conditions and are interested in the covariance of the height as a function of time at a f\/ixed point on the substrate. In each case the power laws of the covariance for short and long times are obtained. They are derived from a variational problem involving two independent Airy processes. For stationary initial conditions we derive an exact formula for the stationary covariance with two approaches: (1)~the variational problem and (2)~deriving the covariance of the time-integrated current at the origin for the corresponding driven lattice gas. In the stationary case we also derive the large time behavior for the covariance of the height gradients.}

\Keywords{KPZ universality, space-time correlations, interacting particles, last passage percolation}

\Classification{60K35; 82C22; 82B43}

\renewcommand{\thefootnote}{\arabic{footnote}}
\setcounter{footnote}{0}

\vspace{-2.5mm}

\section{Introduction}\label{sec1}
Because of novel experiments~\cite{Tak12,Tak13, TS12} and exact solutions (see surveys and lecture notes~\cite{BG12,Cor11,FS10,Qua11,QS15}), there is a continuing interest in growing surfaces in the Kardar--Parisi--Zhang (KPZ) universality class~\cite{KPZ86}, in particular for the case of $1+1$ dimensions. The object of interest is a height function $h(x,t)$ over the one-dimensional substrate space, $x \in \R$, at time $t \geq 0$, which evolves by a stochastic evolution. Examples are the KPZ equation itself, the single step model, polynuclear growth, Eden type growth, and more. The spatial statistics, $x \mapsto h(x,t)$ at large, but f\/ixed time $t$ is fairly well understood. The typical size of the height f\/luctuations is of order $t^{1/3}$ and the correlation length grows as $t^{2/3}$. The precise spatial statistics depends on the initial conditions. Three canonical cases have been singled out, which are f\/lat, step (also curved), and stationary. On the other hand, our understanding of the correlations in time is more fragmentary. For the point-to-point semi-discrete
directed polymer, which corresponds to curved initial data, Johansson~\cite{Joh15} recently derived the long time asymptotics of the joint
distribution of $(h(0,\tau t),h(0,t))$, $\tau$ f\/ixed, $t \to \infty$. In an earlier work on the same quantity~\cite{Dot13} Dotsenko obtains a replica solution of the KPZ equation. In both cases the f\/inal result is an inf\/inite series, from which it seems to be dif\/f\/icult to extract more explicit information\footnote{In~\cite{Dot16} progress has been achieved recently at the level of joint distribution functions for curved initial data in the limit $\tau\to 1$.}. For us, this state of af\/fairs is one motivation to reconsider the issue of the KPZ time correlations.

The most basic observable is the temporal correlation function
\begin{gather}
C^\diamond(t_0,t) = \Cov(h(0,t_0),h(0,t))
 =\E(h(0,t_0)h(0,t)) - \E(h(0,t_0))\E(h(0,t)).\label{1.1}
\end{gather}
Here the superscript ${}^\diamond$ stands for the initial conditions, which are denoted by either ``f\/lat'', ``step'', or ``stat''. In the stationary case the covariance depends only on $t- t_0$. But for f\/lat and curved both arguments have to be kept.

\looseness=-1 The correlation \eqref{1.1} has been measured in the turbulent liquid crystal experiment by Ta\-keu\-chi and Sano~\cite{TS12} and is also determined numerically by Singha~\cite{Sin05} (for the step case) and Ta\-keu\-chi~\cite{Tak12} in the closely related Eden cluster growth. The large $t$ scaling behavior is reported as
\begin{gather}\label{1.2}
C^\mathrm{f\/lat}(t_0,t) \simeq (t_0)^{4/3} t^{-2/3},\qquad C^\mathrm{step}(t_0,t) \simeq (t_0)^{2/3},
\end{gather}
where we ignored the model-dependent prefactors, see \cite[Section~2.6]{TS12} for more details. Thus in the curved case the correlation of the unscaled height function does not decay to $0$ for large $t$, which is surprising at f\/irst sight. The rough explanation is as follows (see also~\cite{KK99}): In the f\/lat case the height $h(0,t)$ depends on the nucleation events in the backward light cone with base points~$x$ such that $|x|\leq t^{2/3}$ and so does $h(0,t_0)$ with $ |x| \leq (t_0)^{2/3}$. On the other side, in the curved case the domain of dependence has the form of a cigar of width~$t^{2/3}$, resp.~$(t_0)^{2/3}$, since at short times only the few nucleation events close to the initial seed are available. Estimating the overlap in each case results in the distinct behavior as stated in~\eqref{1.2}.

In our contribution we consider the covariance
\begin{gather*}
C_t^\diamond(\tau) = t^{-2/3}C^\diamond(\tau t,t)
\end{gather*}
 rescaled according to the KPZ scaling theory. Thus one expects the limit
 \begin{gather*}
\lim_{t \to \infty} C_t^\diamond(\tau) = C^\diamond(\tau)
\end{gather*}
to exist. Without loss of generality one may set $0 \leq \tau \leq 1$. To study $C^\diamond(\tau)$, we consider last passage percolation (LPP) as a particular model in the KPZ universality class. In this model at zero temperature, the height function is represented through the energy of an optimal directed polymer in a random medium, which is tightly related with the totally asymmetric simple exclusion process (TASEP), see Section~\ref{sec2}. We f\/irst obtain an expression for $C^\diamond(\tau)$ based on a variational problem involving two independent Airy processes. This looks complicated, but we succeed in studying the power law behavior of $C^\diamond(\tau)$ for $\tau$ close to $0$ and $1$, see~(\ref{eq3.4}),~(\ref{eq3.5}). In the f\/irst limit our result is in agreement with the behavior stated in~\eqref{1.2}. For stationary initial conditions we even obtain the entire limiting $C^\mathrm{stat}(\tau)$. Proving our result mathematically rigorously is technically dif\/f\/icult and goes beyond the scope of this paper.

An alternative approach comes from switching to local slopes, $\partial_x h(x,t)$, which are then governed by a type of stochastic particle dynamics. For example, the slope of the single step model is equivalent to the TASEP. The process $t \mapsto \partial_t h(0,t)=\mathsf{j}(t)$ is stationary and the covariance $\Cov(\mathsf{j}(t),\mathsf{j}(t'))$ depends only on~$t-t'$. In the particle picture $\Cov(\mathsf{j}(t),\mathsf{j}(0))$ is the correlation of the current (density) across the origin. We argue that $\int_\R dt \Cov(\mathsf{j}(t),\mathsf{j}(0))=0$ and $\Cov(\mathsf{j}(t),\mathsf{j}(0)) \simeq -|t|^{-4/3}$ for large $|t|$, see (\ref{3.23}). Thereby we arrive at an expression for~$C^\mathrm{stat}(\tau)$ which is identical to the one obtained by the LPP method. In fact, $C^\mathrm{stat}(\tau)$ equals the covariance of fractional Brownian motion with Hurst exponent~$\tfrac{1}{3}$. However, since the rescaled height function is expected to converge to a limit with Baik--Rains distribution, the limiting height process cannot be Gaussian (this is proven for a few models~\cite{BCFV14, FS05a,FSW15,PS02b}).

Our contribution consists of three parts. In Section~\ref{sec2} we investigate $C^\diamond(\tau)$ in the framework of directed polymers. In Section~\ref{sec3} we study the current time correlations for stationary lattice gases and in Section~\ref{sec4} we report on Monte-Carlo simulations of the TASEP in support of our theoretical f\/indings.

\section[Variational formulas for the universal part of the two-time distribution]{Variational formulas for the universal part\\ of the two-time distribution}\label{sec2}

As a model in the KPZ universality class we consider the totally asymmetric simple exclusion process (TASEP). Particle conf\/igurations are denoted by \mbox{$\eta \in \{0,1\}^\mathbb{Z}$}, where $\eta_j = 1$ stands for a~particle at lattice site $j$ and $\eta_j = 0$ for site $j$ being void. Particles jump independently one step to the right after an exponentially distributed waiting time and subject to the exclusion rule. Equivalently the exchange rate between sites $j$ and $j+1$ takes the form $c_{j,j+1}(\eta) = \eta_j(1 - \eta_{j+1})$. The particle conf\/iguration at time $t$ is denoted by $\eta(t)$. Of central interest is the height function,~$h(j,t)$, def\/ined through\footnote{In the literature the height function is mostly def\/ined to be twice the one def\/ined in this paper. As we will discuss also the particle current, in our context it seems to be more natural to avoid unnecessary factors of $2$ relating the two quantities.}
\begin{gather}\label{eq3.0}
h(j,t) = \begin{cases}
\displaystyle J(t) + \sum_{i = 1}^j \tfrac{1}{2}(1 - 2 \eta_i(t)), & \mbox{if $j \geq 1$},\\
J(t), & \mbox{if $j =0$},\\
\displaystyle J(t) - \sum_{i = j+1}^0 \tfrac{1}{2}(1 - 2 \eta_i(t)), & \mbox{if $j \leq -1$},
\end{cases}
\end{gather}
where $J(t)$ is the particle current across the bond $(0,1)$ integrated over the time interval $[0,t]$. Note that $h(0,0)= 0$. We study the TASEP because it allows for a simple mapping to last passage percolation (LPP), which will be the main technical tool in this section.

We will study the three dif\/ferent initial conditions mentioned in the introduction:
\begin{itemize}\itemsep=0pt
 \item[(i)] step initial conditions, $\eta=\Id_{\Z_-}$,
 \item[(ii)] f\/lat initial conditions with density $\tfrac{1}{2}$, $\eta=\Id_{2\Z}$,
 \item[(iii)] stationary initial conditions with density $\tfrac{1}{2}$, i.e., $\eta$ is distributed according to $ \nu_{1/2}$, where $\nu_{\rho}$ is the Bernoulli product measure with density $\rho$.
\end{itemize}
Density $\tfrac{1}{2}$ is chosen for convenience, since in this case the characteristic line has velocity $0$.

For these three initial conditions we would like to understand the scaling limit
\begin{gather}\label{eq3.1}
\tau\mapsto \RP^\diamond(\tau) = \lim_{t\to\infty} -2^{4/3} t^{-1/3} \big(h(0,\tau t)-\tfrac{1}{4}\tau t\big),
\end{gather}
which def\/ines $\RP^\diamond(\tau)$, $\tau \geq 0$, as a stochastic process in $\tau$ (provided the limit exists). $\tau$~is a fraction of the physical time $t$ and the asymptotic mean has been subtracted. The fact that the sca\-ling~(\ref{eq3.1}) should give a non-trivial limit process is due to the slow-decorrelation phenomenon, namely that along special space-time paths, f\/luctuations of order $t^{1/3}$ occurs only over a macroscopic time scale. The special paths are the characteristics of the PDE describing the macroscopic evolution of the particle density~\cite{CFP10b, Fer08}.

Up to model dependent scale factors, the limit processes are expected to be universal, meaning that the limit is the same for any model
in the KPZ universality class. In case the particular initial condition has to be specif\/ied, a superscript is added as $\RP^{\rm step}$, $\RP^{\rm f\/lat}$, $\RP^{\rm stat}$, respectively. The one-point distribution of these processes is well-known~\cite{BR00,BR99b,Jo00b,PS02b} and given by
\begin{gather*}
\Pb\big(\RP^{\rm step}(1)\leq s\big)=F_{\rm GUE}(s),\\
\Pb\big(\RP^{\rm f\/lat}(1)\leq s\big)=F_{\rm GOE}\big(2^{2/3}s\big),\\
\Pb\big(\RP^{\rm stat}(1)\leq s\big)=F_{\rm BR}(s),
\end{gather*}
see Appendix~\ref{app} for their def\/inition. We denote by $\xi_{\rm GUE}$, $\xi_{\rm GOE}$, and $\xi_{\rm BR}$ random variables distributed according to GOE/GUE Tracy--Widom distribution and the Baik--Rains distribution respectively.

For the spatial argument, the corresponding scaling limit reads
\begin{gather}\label{eq3.1a}
w\mapsto \mathcal{Y}^\diamond(w) =\lim_{t\to\infty} -2^{4/3}t^{-1/3} \big(h\big(w2^{1/3}t^{2/3}, t\big)-\tfrac{1}{4}t\big)
\end{gather}
with $w\in \mathbb{R}$. For f\/lat and stationary initial conditions, convergence has been proved in the sense of f\/inite-dimensional distribution~\cite{BFP09,BFPS06, Sas05}. For step initial condition weak*-convergence has been proved in~\cite{Jo03b}. More specif\/ically, one has \mbox{$\mathcal{Y}^{\rm step}(w) = \mathcal{A}_2(w)-w^2$}, \mbox{$\mathcal{Y}^{\rm f\/lat}(w) = 2^{1/3} \mathcal{A}_1(2^{-2/3}w)$}, and \mbox{$\mathcal{Y}^{\rm stat}(w) = \mathcal{A}_{\rm stat}(w)$}, see also the review~\cite{Fer07}. Again we refer to Appendix~\ref{app} for the def\/inition of these Airy processes.

In Section~\ref{sec2.1} we will argue that the joint distribution of $\RP^\diamond(\tau)$ and $\RP^\diamond(1)$ can be expressed through a suitable variational formula, involving two independent copies of $\mathcal{Y}^\circ(w)$, with $\circ\in\{{\rm step},{\rm f\/lat},{\rm stat}\}$ depending on the cases. Unfortunately, it is not so straightforward to extract some useful information from these formulas. Hence we f\/irst try to
study the covariance
\begin{gather*}
C^\diamond(\tau):=\Cov\big(\RP^\diamond(\tau),\RP^\diamond(1)\big)=\E\big(\RP^\diamond(\tau)
\RP^\diamond(1)\big)-\E\big(\RP^\diamond(\tau)\big)\E\big(\RP^\diamond(1)\big) .
\end{gather*}
The parameter $\tau$ can be restricted to the interval $[0,1]$, since the case $\tau >1$ is recovered by a~trivial scaling
from the fact that $\RP^\diamond(\tau)$ is given through the limit~(\ref{eq3.1}). As will be seen from the explicit formula for the stationary case or from the numerical simulation in the other cases, for~$\tau$ away from $0$, $1$, $C^\diamond(\tau)$ looks smooth and strictly increasing, but shows interesting scaling behavior close to the boundary points of this interval.
As one of our main results we determine the respective scaling exponents. For $\tau\to 0$ we obtain
\begin{gather}\label{eq3.4}
C^{\rm step}(\tau) = \OR\big(\tau^{2/3}\big) ,\qquad
C^{\rm f\/lat}(\tau) = \OR\big(\tau^{4/3}\big) ,
\end{gather}
and for $\tau\to 1$ we obtain\footnote{The coef\/f\/icient in front of $(1-\tau)^{2/3}$ for the f\/lat case was conjectured by Takeuchi in~\cite{Tak13} and verif\/ied experimentally in his context.}
\begin{gather}
C^{\rm step}(\tau) = \Var(\xi_{\rm GUE})-\tfrac12\Var(\xi_{\rm BR}) (1-\tau)^{2/3}+\Or(1-\tau),\nonumber\\
C^{\rm f\/lat}(\tau) = 2^{-4/3}\Var(\xi_{\rm GOE})-\tfrac12\Var(\xi_{\rm BR}) (1-\tau)^{2/3}+\Or(1-\tau).\label{eq3.5}
\end{gather}
This implies that for the normalized correlation function $A^\diamond(\tau):=C^\diamond(\tau)/C^\diamond(1)$ we have
\begin{gather*}
A^\diamond(\tau) = 1-c^\diamond (1-\tau)^{2/3}+\Or(1-\tau)
\end{gather*}
as $\tau\to 1$, where
\begin{gather*}
c^{\rm step}=\frac{\Var(\xi_{\rm BR})}{2\Var(\xi_{\rm GUE})}\simeq 0.707,\qquad
c^{\rm f\/lat}= \frac{\Var(\xi_{\rm BR})}{2^{-1/3}\Var(\xi_{\rm GOE})}\simeq 0.901.
\end{gather*}
For the stationary case, we obtain the exact expression
\begin{gather}\label{eq2.10}
C^{\rm stat}(\tau)=\Var(\xi_{\rm BR}) \tfrac{1}{2} \big(1+\tau^{2/3}-(1-\tau)^{2/3}\big).
\end{gather}

The behavior close to $\tau=1$ is based on the same reasoning in all three cases. As key ingredient we use that the limit processes $\mathcal{Y}^\diamond$ def\/ined in (\ref{eq3.1a}) are locally Brownian~\cite{CH11,FSW15, Ha07,PS02,QR12}. Close to $\tau = 0$, step and stationary initial conditions exhibit the same scaling exponent. Interestingly, the $\OR(\tau^{2/3})$ behavior relies on two very distinct mechanisms: for the step it is due to the correlations generated at small times, while for the stationary case it is due to the randomness of the initial conditions.

\subsection{TASEP and LPP}\label{sec2.1}
Let us f\/irst recall the relation between TASEP and LPP. A last passage percolation (LPP) model on $\Z^2$ with independent random variables $\{\omega_{i,j},\,i,j\in\Z\}$ is the following. An \emph{up-right path} $\pi=(\pi(0),\pi(1),\ldots,\pi(n))$ on $\Z^2$ from a point~$A$ to a point~$E$ is a sequence of points in $\Z^2$ with \mbox{$\pi(k+1)-\pi(k)\in \{(0,1),(1,0)\}$}, with $\pi(0)=A$ and $\pi(n)=E$, and where $n$ is called the length $\ell(\pi)$ of $\pi$. Now, given a set of points $S_A$, one def\/ines the last passage time $L_{S_A\to E}$ as
\begin{gather}\label{eq3.2}
L_{S_A\to E}=\max_{\begin{subarray}{c}\pi\colon A\to E\\A\in S_A\end{subarray}} \sum_{1\leq k\leq \ell(\pi)} \omega_{\pi(k)}.
\end{gather}
Finally, we denote by $\pi^{\max}_{S_A\to E}$ any maximizer of the last passage time $L_{S_A\to E}$. For continuous random variables, the maximizer is a.s.\ unique.

\looseness=-1 For the TASEP the ordering of particles is preserved. If initially one orders from right to left as
\begin{gather*} \cdots < x_2(0) < x_1(0) < 0 \leq x_0(0)< x_{-1}(0)< \cdots,\end{gather*}
then for all times $t\geq 0$ also $x_{n+1}(t)<x_n(t)$, $n\in\Z$. The $\omega_{i,j}$ in the LPP is the waiting time of particle $j$ to jump from site $i-j-1$ to site $i-j$.
By def\/inition $\omega_{i,j}$ are ${\rm exp}(1)$ i.i.d.\ random variables. Let $S_A=\{(u,k)\in\Z^2\colon u=k+x_k(0),\, k\in\Z\}$. Then
\begin{gather*}
\Pb(L_{{S_A}\to (m,n)}\leq t)=\Pb(x_n(t)\geq m-n).
\end{gather*}
Further, for $m=n$,
\begin{gather*}
\Pb(L_{{S_A}\to (n,n)}\leq t)=\Pb(x_n(t)\geq 0) = \Pb(J(t)\geq n).
\end{gather*}
In particular, for the initial conditions under consideration, the set $S_A$ is given by
\begin{itemize}\itemsep=0pt
\item[(i)] Step initial conditions: $S_A=\{(0,0)\}$.
\item[(ii)] Flat initial conditions with density $\tfrac{1}{2}$: $S_A={\cal L}=\{(i,j)\,|\, i+j=0\}$.
\item[(iii)] Stationary \looseness=-1 initial conditions with density $\tfrac{1}{2}$: $S_A=\tilde{\cal L}$ is a two-sided simple symmetric random walk passing through the origin and rotated by~$\pi/4$. Using Burke's property~\cite{Bur56} one can equivalently replace all the randomness which is above the random line $\tilde{\cal L}$ but outside the f\/irst quadrant by exponentially distributed random variables with para\-me\-ter~$\tfrac{1}{2}$ only along the bordering lines $\{(i,-1),\, i\geq 0\}$ and $\{(-1,i,),\, i\geq 0\}$, see~\cite{PS02b} for more details.
\end{itemize}
See Fig.~\ref{FigLPP} for an illustration.
\begin{figure}[t]\centering
\includegraphics{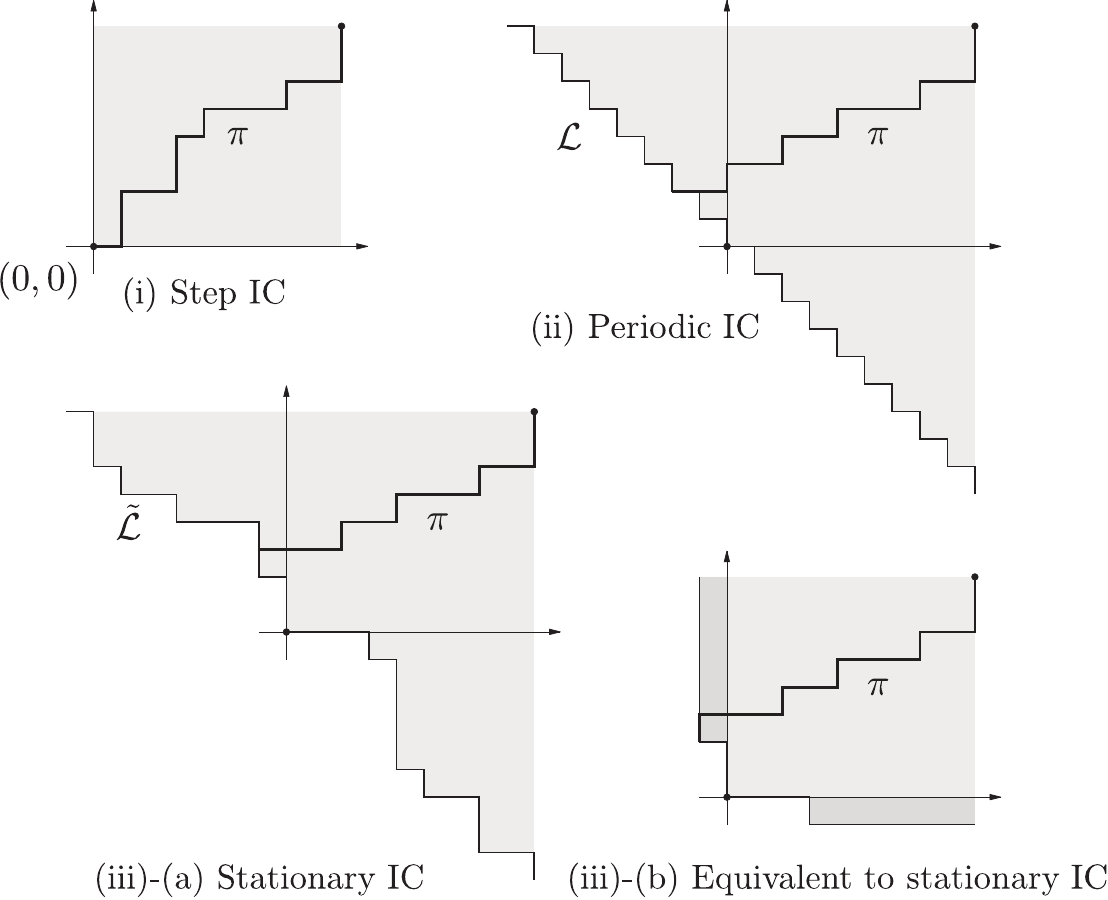}
\caption{Last passage percolation settings corresponding to TASEP with (i) step, (ii) periodic and (iii)~stationary initial conditions. The random variables in the gray regions are $\exp(1)$ i.i.d., while in the dark gray they are $\exp(2)$ i.i.d.
In (iii)-(b) the blank regions at the boundary have a length which is i.i.d. geometric of mean~1.}\label{FigLPP}
\end{figure}

\subsection{Step initial conditions}
TASEP with step initial conditions corresponds to the point-to-point problem in the LPP picture, see Fig.~\ref{FigLPP}(i). In this framework, consider $A_\tau=(\tau t/4,\tau t/4)$ and $I_\tau(u)=A_\tau+u (\tau t/2)^{2/3} (1,-1)$. Then as $t\to\infty$ one has~\cite{BF07,CFP10a, Jo03b}
\begin{gather*}
\frac{L_{0\to A_\tau}-\tau t}{2^{2/3} t^{1/3}}\simeq \tau^{1/3} {\cal A}_2(0),\\
\frac{L_{0\to I_\tau(u)}-\tau t}{2^{2/3} t^{1/3}} \simeq \tau^{1/3} \big({\cal A}_2(u)-u^2\big),\\
\frac{L_{I_\tau(u)\to A_1}-(1-\tau) t}{2^{2/3} t^{1/3}} \simeq (1-\tau)^{1/3} \big[\tilde{\cal A}_2\big(u \hat{\tau}^{2/3}\big) -
\big(u \hat{\tau}^{2/3}\big)^2\big],
\end{gather*}
where ${\cal A}_2$ and $\tilde {\cal A}_2$ are two independent Airy$_2$ processes. These identities are understood for f\/ixed~$\tau$, where the f\/irst is convergence of random variables, while the last two identities hold as processes in~$u$. Also we introduced the convenient shorthand~$\hat{\tau}
= \tau/(1-\tau)$. Using~(\ref{eq3.1}) and~(\ref{eq3.2}) we thus conclude
\begin{gather*}
\RP^{\rm step}(\tau)=\lim_{t\to\infty} \frac{L_{0\to A_\tau}-\tau t}{2^{2/3} t^{1/3}}.
\end{gather*}
Therefore
\begin{gather*}
\RP^{\rm step}(\tau)=\tau^{1/3} {\cal A}_2(0)
\end{gather*}
and, using the relation $L_{0\to A_1}=\max_{u} (L_{0\to I_\tau(u)}+L_{I_\tau(u)\to A_1})$, also
\begin{gather}\label{eqRP1}
\RP^{\rm step}(1)=\tau^{1/3}\max_{u\in\R}\big\{{\cal A}_2(u)-u^2+\hat\tau^{-1/3} \tilde {\cal A}_2\big(u\hat{\tau}^{2/3}\big)-u^2 \hat{\tau}\big\}.
\end{gather}
Together these formulas are a tool for determining the joint distribution of $\RP^{\rm step}(\tau)$, $\RP^{\rm step}(1)$.

\textbf{Limit} $\tau\to 0$. First of all, as $\tau\to 0$, as a process in $u$,
\begin{gather}\label{eqA2BM}
\hat{\tau}^{-1/3} \big(\tilde{\cal A}_2\big(u \hat{\tau}^{2/3}\big)-\tilde{\cal A}_2(0)\big)\simeq \sqrt{2} B(u),
\end{gather}
where $B$ is a standard Brownian motion~\cite{CH11,Ha07,PS02} (with standard meaning with normalization $\Var(B(u))=u$). Further, for the two terms proportional to~$u^2$, the right term is of order~$\tau$ smaller than the left one. Therefore the maximum in~(\ref{eqRP1}) is taken at $u=\OR(1)$ and consequently as $\tau\to 0$ we have
\begin{gather*}
\begin{split}
& C^{\rm step}(\tau)=\Cov\big(\RP^{\rm step}(\tau),\RP^{\rm step}(1)\big)\\
& \hphantom{C^{\rm step}(\tau)}{} \simeq \tau^{2/3}\Cov\big({\cal A}_2(0),\max_{u\in\R}\big\{{\cal A}_2(u)-u^2+\sqrt{2}B(u)\big\}+\hat\tau^{-1/3}\tilde{\cal A}_2(0)\big),
\end{split}
\end{gather*}
where the processes ${\cal A}_2$ and $B$ are independent, and $B$ is independent of $\tilde{\cal A}_2(0)$. Since ${\cal A}_2(0)$ and $\tilde{\cal A}_2(0)$ are independent, their covariance is zero.

To understand what happens, we rewrite the expectation in the covariance as the expectation of the conditional expectation with respect to the Brownian motion $B$, namely
\begin{gather*}
 \Cov\big({\cal A}_2(0),\max_{u\in\R}\big\{{\cal A}_2(u)-u^2+\sqrt{2}B(u)\big\}\big)\\
\qquad {} = \E\Big[\Cov\Big({\cal A}_2(0),\max_{u\in\R}\big\{{\cal A}_2(u)-u^2+\sqrt{2}B(u)\big\}\big| B \Big)\Big].
\end{gather*}
For typical realizations of $B$, the maximum is reached for $u$ of order $1$ (for $B=0$ there is an explicit formula, see~\cite{BKS12,MFQR11,Sch12}). On the other hand, the random variables $\max\limits_{u\in\R}({\cdots})$ and~${\cal A}_2(0)$ are non-trivially correlated. Therefore we conclude $C^{\rm step}(\tau)=\OR(\tau^{2/3})$ as $\tau\to 0$.

\begin{rem}
In the LPP picture, the fact that the maximum is obtained for $u$ of order $1$ is a~consequence of the constraint that the polymer maximizing $L_{0\to A_1}$ starts at the origin.
\end{rem}
\begin{rem}
We have
\begin{gather}
\Cov\Big({\cal A}_2(0),\max_{u\in\R}\big\{{\cal A}_2(u)-u^2+\sqrt{2}B(u)\big\}\Big)\nonumber\\
\qquad{} =\E\Big({\cal A}_2(0)\max_{u\in\R}\big\{{\cal A}_2(u)-u^2+\sqrt{2}B(u)\big\}\Big),\label{eq2.21}
\end{gather}
where we used the fact that $\E({\cal A}_{\rm stat}(0))=0$ and the identity~\cite{QR13}
\begin{gather}\label{eq2.23}
\RP^{\rm stat}(1)={\cal A}_{\rm stat}(0)\stackrel{d}{=}\max_{v\in\R}\big\{{\cal A}_2(v)-v^2+\sqrt{2} B(v)\big\}
\end{gather}
in distribution, where the Airy$_2$ process ${\cal A}_2$ and the Brownian motion $B$ are independent. The joint distribution of the two random variables in~(\ref{eq2.21}) might be obtained analytically from the formulas in~\cite{Joh15} and~\cite{Dot13}.
\end{rem}

\textbf{Limit} $\tau\to 1$. In this case, the maximum in (\ref{eqRP1}) is achieved for $u=\OR((1-\tau)^{2/3})$ as can one see for instance by symmetry of the point-to-point problem. Therefore let us set $v=u \hat{\tau}^{2/3}$ so that now
\begin{gather}\label{eq2.20}
\RP^{\rm step}(1)
=(1-\tau)^{1/3}\max_{v\in\R}\big\{ \hat\tau^{1/3} {\cal A}_2\big(v\hat{\tau}^{-2/3}\big)-v^2\hat{\tau}^{-1}+\big( \tilde {\cal A}_2(v)-v^2\big)\big\}.
\end{gather}
To argue about the behavior for $\tau\to 1$, we will use the convergence of the Airy$_2$ process to Brownian motion (see (\ref{eqA2BM})) and we use the identity
\begin{gather*}
C^{\rm step}(\tau)=\tfrac12 \Var(\RP^{\rm step}(1))+\tfrac12 \Var(\RP^{\rm step}(\tau))-\tfrac12 \E\big((\RP^{\rm step}(\tau)-\RP^{\rm step}(1))^2\big)\\
\hphantom{C^{\rm step}(\tau)}{} =\tfrac12(1+\tau^{2/3})\Var(\RP^{\rm step}(1))-\tfrac12 \E\big((\RP^{\rm step}(\tau)-\RP^{\rm step}(1))^2\big).
\end{gather*}
Now, by (\ref{eq2.20}) and $\RP^{\rm step}(\tau)=(1-\tau)^{1/3} \hat\tau^{1/3} {\cal A}_2(0)$, we have
\begin{gather*}
\RP^{\rm step}(1)-\RP^{\rm step}(\tau) =(1-\tau)^{1/3}\max_{v\in\R}\big\{\hat\tau^{1/3} \big[{\cal A}_2\big(v \hat\tau^{-2/3}\big)\!-{\cal A}_2(0)\big]+\tilde {\cal A}_2(v)-v^2 \big(1+\hat\tau^{-1}\big)\big\},
\end{gather*}
where ${\cal A}_2$ and $\tilde {\cal A}_2$ are independent Airy$_2$ processes. In the $\tau\to 1$ limit, using (\ref{eqA2BM}) the f\/irst term becomes $\sqrt{2}B(v)$ and since the maximum is obtained for $v$ of order one, the term $v^2\hat\tau^{-1}$ should be at most a correction of order $\Or(1-\tau)$. (\ref{eq2.23}) gives us
\begin{gather*}
C^{\rm step}(\tau)\simeq \tfrac12\big(1+\tau^{2/3}\big)\Var(\RP^{\rm step}(1))-\tfrac12 (1-\tau)^{2/3}\Var\big(\RP^{\rm stat}(1)\big)+ \Or(1-\tau),
\end{gather*}
where we used the property that ${\cal A}_{\rm stat}(0)$ has mean zero.

\begin{rem}
To make the present result into a theorem one has to control the convergence of the Airy process to Brownian motion. In recent work in progress, Corwin and Hammond establish rigorously the behavior close to $\tau=0$ and $\tau=1$ for the point-to-point problem~\cite{CH16}.
\end{rem}

\subsection{Flat initial conditions}
TASEP with f\/lat initial conditions corresponds to the point-to-line problem in the LPP picture, as illustrated in Fig.~\ref{FigLPP}(ii). Consider $A_\tau=(\tau t/4,\tau t/4)$ and \mbox{$I_\tau(u)=A_\tau+u (\tau t/2)^{2/3} (1,-1)$}. From~\cite{BFPS06,CFP10a}, we know that by setting $c=2^{1/3}$, in the $t\to\infty$ limit we have
\begin{gather*}
\frac{L_{{\cal L}\to A_\tau}-\tau t}{2^{2/3}t^{1/3}} \simeq c \tau^{1/3} {\cal A}_1(0),\\
\frac{L_{{\cal L}\to I_\tau(u)}-\tau t}{2^{2/3}t^{1/3}} \simeq c \tau^{1/3} {\cal A}_1\big(c^{-2} u\big),\\
\frac{L_{I_\tau(u)\to A_1}-(1-\tau) t}{2^{2/3}t^{1/3}} \simeq (1-\tau)^{1/3} \big[\tilde {\cal A}_2\big(u \hat{\tau}^{2/3}\big)-\big(u \hat{\tau}^{2/3}\big)^2\big],
\end{gather*}
where the Airy$_1$ process ${\cal A}_1$ is independent of the Airy$_2$ process $\tilde {\cal A}_2$. As before,
the f\/irst identity is understood for f\/ixed $\tau$, while the last two identities hold as processes in $u$.
We have
\begin{gather*}
\RP^{\rm f\/lat}(\tau)=\lim_{t\to\infty} \frac{L_{{\cal L}\to A_\tau}-\tau t}{2^{2/3} t^{1/3}}
\end{gather*}
and thus
\begin{gather*}
\RP^{\rm f\/lat}(\tau)=c\tau^{1/3} {\cal A}_1(0).
\end{gather*}
Further, using the relation $L_{{\cal L}\to A_1}=\max_{u} (L_{{\cal L}\to I_\tau(u)}+L_{I_\tau(u)\to A_1})$, we obtain
\begin{gather*}
\RP^{\rm f\/lat}(1)=\tau^{1/3}\max_{u\in\R}\big\{ c {\cal A}_1\big(c^{-2} u\big)+\hat\tau^{-1/3} \tilde {\cal A}_2\big( u \hat\tau^{2/3}\big)-u^2\hat\tau\big\}.
\end{gather*}

\textbf{Limit} $\tau\to 0$. Unlike for step initial conditions, this time the quadratic term responsible for the localization of the maximizer over a distance of order $1$ (in the $u$ variable) is absent. This implies that the maximization no longer occurs for $u$ of order $1$. Rather, from~\cite{Jo00,MFQR11} we know that the point-to-line maximizer starts from the line ${\cal L}$ at a distance of order $t^{2/3}$ from the origin. As a~consequence the maximization will occur typically at values $u=\OR(\tau^{-2/3})$. Therefore
\begin{gather*}
C^{\rm f\/lat}(\tau) =\tau^{2/3}\Cov\Big({\cal A}_1(0),\max_{u\in\R}\big\{ c {\cal A}_1\big(c^{-2} u\big)+ \hat\tau^{-1/3} \tilde {\cal A}_2\big( u \hat{\tau}^{2/3}\big)-u^2 \hat{\tau} \big\}\Big) \\
\hphantom{C^{\rm f\/lat}(\tau)}{} = \tau^{2/3} \E\Big[\Cov\Big({\cal A}_1(0),\max_{u\in\R}\big\{ c {\cal A}_1\big(c^{-2} u\big)+\hat\tau^{-1/3} \tilde {\cal A}_2\big( u \hat{\tau}^{2/3}\big)-u^2 \hat{\tau}\big\}\Big) \big| \tilde {\cal A}_2\Big].
\end{gather*}
To understand the behavior at small values of $\tau$ of the covariance between $\RP^{\rm f\/lat}(\tau)$ and $\RP^{\rm f\/lat}(1)$, we need to consider the following two cases (see Fig.~\ref{FigLPPStepFlat} for an illustration).
\begin{figure}[t]\centering
\includegraphics{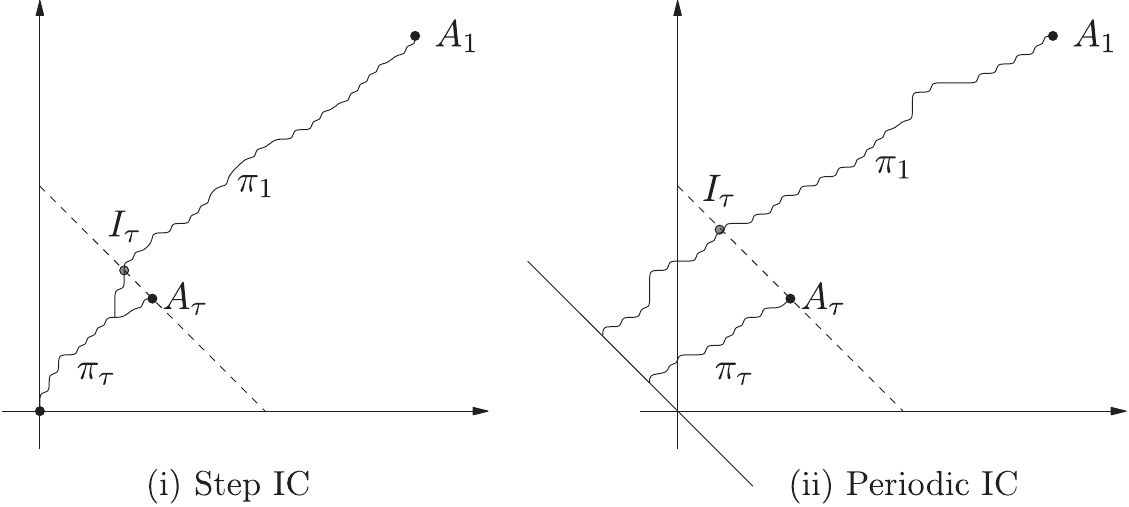}
\caption{The maximizer of the LPP for $A_\tau$ is denoted by $\pi_\tau$, and for $A_1$ by $\pi_1$. The LPP for $A_1$ can be decomposed in the LPP to the dashed line and the one from the dashed line to $A_1$. For periodic initial condition, the probability that $\pi_\tau$ and $\pi_1$ merges is expected to be of order $\OR(\tau^{2/3})$.}\label{FigLPPStepFlat}
\end{figure}

(1) Realizations of $\tilde {\cal A}_2$ such that the maximization occurs for $u\gg 1$. In this case, since the covariance of the Airy$_1$ process ${\cal A}_1$ decays super-exponentially~\cite{BFP08}, the covariance conditioned on those events goes to zero faster than any power of $\tau$.

(2) Realizations of $\tilde {\cal A}_2$ such that the maximization occurs for $u=\OR(1)$. In this case, the covariance conditioned on those events is of order $\OR(\tau^{2/3})$ by the same argument as for step initial conditions. The only minor dif\/ference is to replace ${\cal A}_2(u)-u^2$ by $c {\cal A}_1(c^{-2} u)$.

The f\/irst situation occurs with probability of order $1-\OR(\tau^{2/3})$, while the second case only with probability $\OR(\tau^{2/3})$. This is due to the superdif\/fusive transversal f\/luctuations of the maximizers (compare with the point-to-point transversal f\/luctuations in Poisson points see~\cite{Jo00} and \cite[Section~9]{BSS14} for a ref\/ined result).
Therefore as $\tau\to 0$,
\begin{gather*}
C^{\rm f\/lat}(\tau)=\Cov(\RP(\tau),\RP(1))=\OR\big(\tau^{4/3}\big).
\end{gather*}

\textbf{Limit} $\tau\to 1$. We use the same argument as for the step-initial condition. (\ref{eq2.20}) is replaced by
\begin{gather*}
\RP^{\rm f\/lat}(1)
=(1-\tau)^{1/3}\max_{v\in\R}\big\{ \hat\tau^{1/3} {\cal A}_1\big(v\hat{\tau}^{-2/3}\big)+\big( \tilde {\cal A}_2(v)-v^2\big)\big\}.
\end{gather*}
Thus we get
\begin{gather*}
C^{\rm f\/lat}(\tau)=\tfrac12(1+\tau^{2/3})\Var\big(\RP^{\rm f\/lat}(1)\big)-\tfrac12 \E\big(\big(\RP^{\rm f\/lat}(\tau)-\RP^{\rm f\/lat}(1)\big)^2\big).
\end{gather*}
Now,
\begin{gather*}
\RP^{\rm f\/lat}(\tau)-\RP^{\rm step}(1) =(1-\tau)^{1/3}\max_{v\in\R}\big\{\hat\tau^{1/3} \big[{\cal A}_1\big(v \hat\tau^{-2/3}\big)-{\cal A}_1(0)\big]+\tilde {\cal A}_2(v)-v^2\big\},
\end{gather*}
where the two Airy processes, ${\cal A}_1$ and $\tilde {\cal A}_2$, are independent. Using the property that the Airy$_1$ process is locally Brownian~\cite{QR12}, one concludes that
\begin{gather*}
C^{\rm f\/lat}(\tau)\simeq \tfrac12\big(1+\tau^{2/3}\big)\Var\big(\RP^{\rm f\/lat}(1)\big)-\tfrac12 (1-\tau)^{2/3} \Var\big(\RP^{\rm stat}(1)\big)+ \Or(1-\tau).
\end{gather*}

\subsection{Stationary initial conditions}

{\sloppy For the stationary initial conditions we employ the LPP with boundary conditions, see Fig.~\ref{FigLPP}(iii)(b) for an illustration, and denote the corresponding maximal last passage time by~$L^{\cal B}$. Let $A_\tau=(\tau t/4,\tau t/4)$ and $I_\tau(u)=A_\tau+u (\tau t/2)^{2/3} (1,-1)$. Then from~\cite{BFP09, SI04b} we know that in the limit $t\to\infty$ one has
\begin{gather*}
\frac{L^{\cal B}_{0\to A_\tau}-\tau t}{2^{2/3} t^{1/3}} \simeq \tau^{1/3} {\cal A}_{\rm stat}(0),\\
\frac{L^{\cal B}_{0\to I_\tau(u)}-\tau t}{2^{2/3} t^{1/3}} \simeq \tau^{1/3} {\cal A}_{\rm stat}(u),\\
\frac{L_{I_\tau(u)-A_1}-(1-\tau) t}{2^{2/3} t^{1/3}} \simeq (1-\tau)^{1/3} \big[\tilde{\cal A}_2\big(u \hat{\tau}^{2/3}\big) -
\big(u \hat{\tau}^{2/3}\big)^2\big],
\end{gather*}
where the processes ${\cal A}_{\rm stat}$ and $\tilde {\cal A}_2$ are independent. As before,
the f\/irst identity is understood for f\/ixed~$\tau$, while the last two identities hold as processes in~$u$.

}

Further it holds
\begin{gather*}
\RP^{\rm stat}(\tau)=\lim_{t\to\infty} \frac{L^{\cal B}_{0\to A_\tau}-\tau t}{2^{2/3} t^{1/3}},\qquad
\RP^{\rm stat}(\tau)=\tau^{1/3} {\cal A}_{\rm stat}(0),
\end{gather*}
and, using the relation $L^{\cal B}_{0\to A_1}=\max_{u} (L^{\cal B}_{0\to I_\tau(u)}+L_{I_\tau(u)\to A_1} )$, we obtain
\begin{gather}\label{eqRP3}
\RP^{\rm stat}(1)=\tau^{1/3}\max_{u\in\R}\big\{{\cal A}_{\rm stat}(u)+\hat\tau^{-1/3} \tilde {\cal A}_2\big(u\hat{\tau}^{2/3}\big)-u^2\hat{\tau}^{-1}\big\}.
\end{gather}

\textbf{Limit} $\tau\to 0$. In the LPP picture with boundary terms, denote by $C_1$ and $C_\tau$ the sites on the boundary at which the maximizers of $L_{(-1,-1)\to A_1}$ and $L_{(-1,-1)\to A_\tau}$ enter into the po\-si\-ti\-ve quadrant. Similarly to f\/lat initial conditions, the maximizer in (\ref{eqRP3}) is attained for $u$ of order~$\OR(\tau^{-2/3})$.

However, this time the correlations do not decay super-exponentially. We have
\begin{gather*}
C^{\rm stat}(\tau) =\tau^{2/3}\Cov\Big({\cal A}_{\rm stat}(0),\max_{u\in\R}\big\{{\cal A}_{\rm stat}(u)+\hat\tau^{-1/3} \tilde {\cal A}_2\big( u \hat{\tau}^{2/3}\big)-u^2\hat{\tau}\big\}\Big) \\
\hphantom{C^{\rm stat}(\tau)}{} = \tau^{2/3}\E\Big[\Cov\Big({\cal A}_{\rm stat}(0),\max_{u\in\R}\big\{{\cal A}_{\rm stat}(u)+\hat\tau^{-1/3} \tilde {\cal A}_2\big( u \hat{\tau}^{2/3}\big)-u^2\hat{\tau}\big\}\Big) \big| \tilde {\cal A}_2\Big].
\end{gather*}
To understand the behavior for the covariance of $\RP(\tau)$ and $\RP(1)$ at small values of $\tau$, we need to consider the following two cases (see Fig.~\ref{FigLPPStationary} for an illustration).
\begin{figure}[t]\centering
\includegraphics{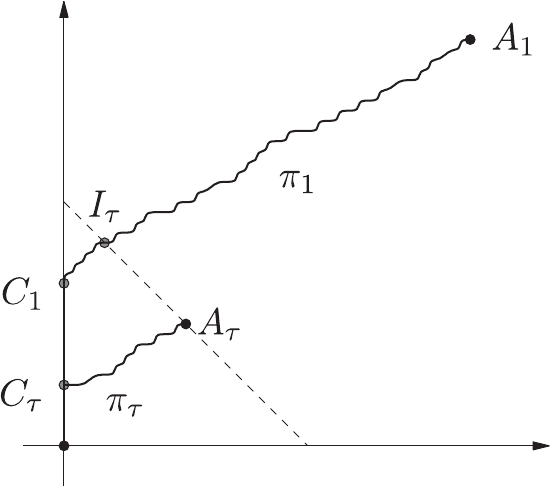}
\caption{The maximizer of the LPP to $A_\tau$, $A_1$ are denoted by $\pi_\tau$, $\pi_1$ respectively. $C_1$ and $C_\tau$ are the points where the maximizers leaves the axis.}\label{FigLPPStationary}
\end{figure}

 (1) Realization of $\tilde {\cal A}_2$ such that the maximization occurs for $u=\OR(1)$. The same argument as for step initial conditions indicates that the covariance conditioned on those events is of order~$\OR(\tau^{2/3})$. Since these events occur with probability of order~$\OR(\tau^{2/3})$, the overall contribution is of order~$\OR(\tau^{4/3})$.

(2) Realizations of $\tilde {\cal A}_2$ such that the maximization occurs for $u\gg 1$. This event occurs with probability $1-\OR(\tau^{2/3})$. The maximizers of $L_{(-1,-1)\to A_1}$ and of $L_{(-1,-1)\to A_\tau}$ use disjoint background noise, except for the randomness on the boundaries (in case they are at the same boundary). Thus in this case the covariance of the LPP to $A_1$ and $A_\tau$ should be as the covariance of the LPP to $C_1$ and $C_\tau$ at leading order.

With this reasoning, one expects that
\begin{gather*}
C^{\rm stat}(\tau) =\Cov\big(\RP^{\rm stat}(\tau),\RP^{\rm stat}(1)\big)\\
\hphantom{C^{\rm stat}(\tau)}{} \simeq \OR(1)\max\big\{\tau^{4/3},t^{-2/3}\Cov(L_{(-1,-1)\to C_\tau},L_{(-1,-1)\to C_1})\big\}.
\end{gather*}
Since the LPP on the boundaries is merely sum of iid random variables, by the central limit theorem, in the $t\to\infty$ limit,
\begin{gather*}
t^{-1/3} L_{(-1,-1)\to (x t^{2/3},-1)} \to 2{\cal B}(x),\qquad
t^{-1/3} L_{(-1,-1)\to (-1,x t^{2/3})} \to 2{\cal B}(-x),
\end{gather*}
where $x\mapsto {\cal B}(x)$ is a two-sided Brownian motion with constant drift. For its covariance,
$\Cov({\cal B}(x),{\cal B}(y))=\tfrac{1+\sgn(xy)}{2} \min\{|x|,|y|\}$ independent of the drift. Finally, since $|C_1|\sim t^{2/3}$ and $|C_\tau|\sim (\tau t)^{2/3}$, we obtain
\begin{gather*}
C^{\rm stat}(\tau)\simeq \OR(1)\max\big\{\tau^{4/3},\tau^{2/3}\big\}=\OR\big(\tau^{2/3}\big).
\end{gather*}

\textbf{Entire $\tau$ interval}. The argument used to determine the $\tau\to 1$ limit in the step and f\/lat initial condition case, can be used to derive a formula for the covariance in the stationary case. (\ref{eq2.20}) is replaced by
\begin{gather*}
\RP^{\rm stat}(1)
=(1-\tau)^{1/3}\max_{v\in\R}\big\{ \hat\tau^{1/3} {\cal A}_{\rm stat}\big(v\hat{\tau}^{-2/3}\big)+\big( \tilde {\cal A}_2(v)-v^2\big)\big\}.
\end{gather*}
Thus we get
\begin{gather*}
C^{\rm stat}(\tau)=\tfrac12\big(1+\tau^{2/3}\big)\Var\big(\RP^{\rm stat}(1)\big)-\tfrac12 \E\big(\big(\RP^{\rm stat}(\tau)-\RP^{\rm stat}(1)\big)^2\big).
\end{gather*}
But now
\begin{gather*}
\RP^{\rm stat}(\tau)-\RP^{\rm stat}(1) =(1-\tau)^{1/3}\max_{v\in\R}\big\{\hat\tau^{1/3} \big[{\cal A}_{\rm stat}\big(v \hat\tau^{-2/3}\big)-{\cal A}_{\rm stat}(0)\big]+\tilde {\cal A}_2(v)-v^2\big\},
\end{gather*}
where the two Airy processes, ${\cal A}_{\rm stat}$ and $\tilde {\cal A}_2$, are independent. For Airy$_{\rm stat}$ the increments are not only locally Brownian, but exactly Brownian. More precisely,
\begin{gather*}
\hat\tau^{1/3} \big[{\cal A}_{\rm stat}\big(v \hat\tau^{-2/3}\big)-{\cal A}_{\rm stat}(0)\big] \stackrel{d}{=} \sqrt{2} B(u),
\end{gather*}
where $B$ is a standard Brownian motion. Then, using the identity (\ref{eq2.23}), we obtain
\begin{gather}\label{eq2.39}
C^{\rm stat}(\tau) = \tfrac12\big(1+\tau^{2/3}-(1-\tau\big)^{2/3})\Var\big(\RP^{\rm stat}(1)\big)
\end{gather}
for $0 \leq \tau \leq 1$.

\section{Current covariance for stationary lattice gases}\label{sec3}
The height function $h(0,t)$ of the TASEP is identical to the time-integrated current across the bond $(0,1)$, denoted by~$J(t)$ in~\eqref{eq3.0}. This suggests to study the covariance of the same observable for a more general class of one-dimensional lattice gases. The mapping to LPP is then lost. On the other hand, in case of stationary initial conditions, one can exploit the local conservation law for the particle number together with space-time stationarity to obtain some information on the current covariance. Thereby we extend the validity of~\eqref{eq2.39}. The covariance of $J(t)$ is identical to the one of fractional Brownian motion in the scaling limit. For reversible models the Hurst parameter is $H = \tfrac{1}{4}$, while for non-reversible lattice gases $H = \tfrac{1}{3}$. In fact, for reversible models it is expected, and proved for particular cases~\cite{DMF02,PS08}, that as a stochastic process~$J(t)$ converges under the appropriate scaling to fractional Brownian motion, which is a Gaussian process. Such a result cannot hold in the non-reversible case, since the large~$t$ distribution of~$J(t)$ is Baik--Rains, as proved for a few models~\cite{BCFV14, FS05a,FSW15,PS02b}.

We consider exclusion processes on $\mathbb{Z}$, for simplicity with nearest neighbor jumps only. They are def\/ined as a generalization
of the TASEP by allowing for an arbitrary exchange rate $c_{j,j+1}(\eta) >0$. For the ASEP the exchange rates are $c_{j,j+1}(\eta) = p\eta_j(1-\eta_{j+1})+q(1- \eta_j)\eta_{j+1}$ with $p+q =1$, $p = \tfrac{1}{2}$ being the reversible SSEP. We assume that $c_{j,j+1}$ has f\/inite range and is invariant under lattice translations. The generator,~$L$, of the corresponding Markov jump process is then def\/ined through
\begin{gather*}
Lf(\eta) = \sum_{j\in\mathbb{Z}}c_{j,j+1}(\eta)\big(f\big(\eta^{j,j+1}\big) - f(\eta)\big)
\end{gather*}
acting on local functions $f$, where $\eta^{j,j+1}$ denotes the conf\/iguration $\eta$ with occupancies at sites~$j$ and $j+1$ exchanged.

We start the dynamics in the steady state. For reversible models, by def\/inition there is a~f\/inite range translation invariant energy function, $H$, such that
\begin{gather*}
c_{j,j+1}(\eta) = c_{j,j+1}\big(\eta^{j,j+1}\big) \mathrm{e}^{-[H(\eta^{j,j+1}) - H(\eta)]}.
\end{gather*}
For given average density, $0 \leq \rho \leq 1$, there is a unique stationary measure, $\mu_\rho$, satisfying $\mu_\rho = \mu_\rho \mathrm{e}^{Lt}$. $\mu_\rho$ is the Gibbs measure for $H - \bar{\mu} \sum_j \eta_j $, where the chemical potential $\bar{\mu}$ has to be adjusted such that the average density equals $\rho$. On the other hand, for non-reversible lattice gases one immediately encounters the long-standing problem to prove the existence of a unique stationary measure at f\/ixed $\rho$. Here we simply assume such a property to be valid, including the exponential space-mixing of~$\mu_\rho$. We use $\E (\cdot)$ as a generic symbol for the process expectation and $\langle \cdot\rangle_\rho$ as expectation with respect to~$\mu_\rho$. For the ASEP the steady state is Bernoulli and obviously our assumptions hold.

Let us consider the empirical current across the bond $(j,j+1)$, denoted by $\mathsf{j}_{j,j+1}(t)$. This is a sequence of $\delta$-functions with weight $1$ for a jump from $j$ to $j+1$ and weight $-1$ for the reverse jump. The time-integrated current across the bond $(j,j+1)$ is then
\begin{gather*}
J_{j,j+1}(t) = \int_0^t ds\, \mathsf{j}_{j,j+1}(s)
\end{gather*}
with the convention $J(t) = J_{0,1}(t)$, $\mathsf{j}(t) = \mathsf{j}_{0,1}(t)$. The average current reads $\mathbb{E} ( \mathsf{j}(t) ) \!= \!\langle c_{0,1}(\eta)(\eta_0 - \eta_1)\rangle_\rho = j(\rho)$.
We also introduce the stationary covariance
\begin{gather*}
S(j,t) = \Cov(\eta_j(t),\eta_0(0)) = \mathbb{E}\big(\eta_j(t)\eta_0(0)\big) - \rho^2.
\end{gather*}
There is a sum rule which connects $S$ with the variance of $J(t)$,
\begin{gather}\label{3.5}
\Var( J(t)) = \sum_{j \in \mathbb{Z}}|j|S(j,t) - \sum_{j \in \mathbb{Z}}|j|S(j,0).
\end{gather}
The proof is deferred to Appendix~\ref{app1}.

Since $J(t)$ has stationary increments, it is convenient to study the correlations of the increments $dJ(t) = \mathsf{j}(t)dt$. As discussed in Appendix~\ref{app1}, the covariance is given by
\begin{gather*}
\Cov\big(\mathsf{j}(t),\mathsf{j}(t')\big) = \langle c_{0,1}\rangle_\rho \delta(t-t')+ h(t-t').
\end{gather*}
For the continuous part we f\/irst def\/ine the generator of time reversed process, $L^\mathrm{R}$, through $\langle f (Lg) \rangle_\rho = \langle (L^\mathrm{R}f)g \rangle_\rho$. Its exchange rates are given by
\begin{gather*}
c^\mathrm{R}_{j,j+1}(\eta) = \frac{\mu_\rho(\eta^{j,j+1})}{\mu_\rho(\eta)} c_{j,j+1}\big(\eta^{j,j+1}\big).
\end{gather*}
Hence the current function across the bond $(j,j+1)$ equals
\begin{gather}r_{j,j+1}(\eta) = c_{j,j+1}(\eta)(\eta_j - \eta_{j+1})
\end{gather}
and the time-reversed current function equals
\begin{gather*}
r^\mathrm{R}_{j,j+1}(\eta) = c^\mathrm{R}_{j,j+1}(\eta^{j,j+1})(\eta_j - \eta_{j+1}).
\end{gather*}
They satisfy $\langle r_{j,j+1}\rangle_\rho = - \langle r^\mathrm{R}_{j,j+1}\rangle_\rho$. Then
\begin{gather}\label{3.20}
h(t) = -\big\langle \big(r^\mathrm{R}_{0,1}- j(\rho)\big)\mathrm{e}^{L|t|}(r_{0,1}- j(\rho))\big\rangle_\rho.
\end{gather}

\subsection{Reversible models}\label{sec3.1}
While our focus is on non-reversible models, it is still instructive to f\/irst explain how fractional Brownian motion appears for reversible lattice gases. Then $r^\mathrm{R}_{j,j+1} = r_{j,j+1}(\eta)$ and the smooth part~$h(t)$ simplif\/ies to
\begin{gather*}
h(t) =- \langle c_{0,1}(\eta) (\eta_0 - \eta_1)\mathrm{e}^{L|t|} c_{0,1}(\eta) (\eta_0 - \eta_1)\rangle_\rho,
\end{gather*}
see Appendix~\ref{app1}. Since $L$ is a symmetric operator in the Hilbert space $L^2(\{0,1\}^\Z,\mu_\rho)$, there exists a spectral measure $\nu$ of f\/inite mass such that
\begin{gather}\label{3.8}
h(t) = - \int_0^\infty \nu(d\lambda) \mathrm{e}^{-\lambda |t|}.
\end{gather}
In particular, $h$ is monotonically increasing with $h(0) = - \langle c_{0,1}(\eta)^2(\eta_0 - \eta_1)^2\rangle_\rho$ and $h(\infty) = 0$.

\looseness=-1
From hydrodynamic f\/luctuation theory~\cite{Cha94,Cha96}, one knows that $S(j,t)$ broadens dif\/fusively as
\begin{gather}\label{3.9}
S(j,t) \simeq \chi (Dt)^{-1/2} f_\mathrm{G}\big((Dt)^{-1/2}j\big)
\end{gather}
with $f_\mathrm{G}$ the standard Gaussian, $D$ a dif\/fusion constant depending on $\rho$, and the susceptibility
\begin{gather*}
\chi = \sum _{j \in \Z}S(j,0).
\end{gather*}
Hence, using \eqref{3.9} for large $t$,
\begin{gather*}
\sum_{j \in \mathbb{Z}}|j|S(j,t) \simeq \chi(Dt)^{1/2}\int_\mathbb{R} dx |x|f_\mathrm{G}(x).
\end{gather*}
Now,
\begin{gather*}
\Var(J(t)) = \langle c_{0,1}\rangle_\rho t + \int_0^tds \int_0^tds' h(s-s').
\end{gather*}
The sum rule \eqref{3.5} implies a variance of order~$\sqrt{t}$. Thus to cancel the leading behavior proportional to $t$, one must have
\begin{gather*}
\int_\mathbb{R} dt h(t) = -\langle c_{0,1}\rangle_\rho.
\end{gather*}
Substituting in \eqref{3.5}, one arrives at
\begin{gather*}
\chi \int_\mathbb{R} dx |x|f_\mathrm{G}(x) (D t)^{1/2} \simeq - 2\int_0^tds\int_s^\infty du h(u),
\end{gather*}
which implies
\begin{gather}\label{3.15}
h(t) \simeq - c_0 t^{-3/2} ,\qquad c_0 =\tfrac{1}{8} D^{1/2} \chi \int_\mathbb{R} dx |x|f_\mathrm{G}(x).
\end{gather}
The current correlation is negative and decays as $-|t|^{-3/2}$.

With this information, one can now determine the covariance of $J(t)$,
\begin{gather}
\Cov\big(J(t)J(\tau t)\big)
= -\int_0^t \int_0^{\tau t} ds ds'\left(\int_\mathbb{R} du\, h(u) \delta(s-s') - h(s-s')\right) \nonumber\\
\hphantom{\Cov\big(J(t)J(\tau t)\big)}{} = -\int_0^{\tau t}ds \left(2 \int_s^\infty ds' h(s') - \int_{\tau t - s}^{t-s} ds' h(s')\right)\label{3.16}
\end{gather}
with $0 \leq \tau \leq 1$. We insert the asymptotics from \eqref{3.15} in the form $-c_0(c_1+Dt)^{-3/2}$. Then
\begin{gather*}
\Cov\big(J(t)J(\tau t)\big) \simeq \big(1 + \tau^{1/2} - (1 -\tau)^{1/2}\big) (Dt)^{1/2}\chi \int_0^\infty dx x f_\mathrm{G}(x)
\end{gather*}
for large $t$, which one recognizes as the covariance of fractional Brownian motion with Hurst parameter $H = \tfrac{1}{4}$.

\subsection{Non-reversible models, zero propagation speed}\label{sec3.2}

For reversible lattice gases the average current $j(\rho)$ vanishes and a localized perturbation stays centered, compare with \eqref{3.9}. For non-reversible models the average current does not vanish, in general. A small perturbation of the steady state will propagate with velocity $v(\rho) = j'(\rho) $, which generically will be non-zero. The correlator is centered at~$v(\rho) t$. If $v(\rho) \neq 0$, then the sum rule implies that $\Var(J(t)) \sim \sqrt{t}$, indicating that $J(t)$ will be close to a Brownian motion. Fractional Brownian motion can be seen only when the current is integrated along the ray $\{x = v(\rho)t\}$. To properly implement such a notion requires extra considerations, which will be explained in the next subsection. For this part we assume $v(\rho) = 0$. For the ASEP $j(\rho) = (p-q) \rho (1 - \rho)$ and our condition holds only at $\rho = \tfrac{1}{2}$.

Secondly non-reversible models are in the KPZ universality class and the covariance is expected to scale as
\begin{gather}\label{3.21}
S(j,t) \simeq \chi(\Gamma t)^{-2/3} f_\mathrm{KPZ}\big((\Gamma t)^{-2/3}j\big)
\end{gather}
with $\Gamma =\tfrac{1}{2}\chi^2 |j''(\rho)|$ according to KPZ scaling theory~\cite{KMH92}. From the sum rule \eqref{3.5}, again we infer that
\begin{gather*}
\int_\mathbb{R} dt h(t) = -\langle c_{0,1}\rangle_\rho
\end{gather*}
with $h(t)$ given by equation~\eqref{3.20}. Thus, substituting \eqref{3.21}, one arrives at
\begin{gather*}
\chi\int_\mathbb{R} dx |x|f_\mathrm{KPZ}(x) (\Gamma t)^{2/3} \simeq - 2\int_0^tds\int_s^\infty du h(u),
\end{gather*}
which implies
\begin{gather}\label{3.23}
h(t) \simeq - c_0 t^{-4/3} ,\qquad c_0 =\tfrac{1}{9} \Gamma^{2/3} \chi \int_\mathbb{R} dx |x|f_\mathrm{KPZ}(x).
\end{gather}
The current correlation is negative and decays as $-|t|^{-4/3}$. The full covariance is obtained by the same scheme as above, see (\ref{3.16}), with the result
\begin{gather}\label{3.24}
\Cov\big(J(t),J(\tau t)\big) \simeq \tfrac12 \big(1 + \tau^{2/3} - (1 -\tau)^{2/3}\big) (\Gamma t)^{2/3} \chi \int_\R dx |x|f_\mathrm{KPZ}(x)
\end{gather}
valid for large $t$. We recognize the covariance of fractional Brownian motion with Hurst parameter $H = \tfrac{1}{3}$.
Note that the Hurst exponent for the driven lattice gas is larger than the reversible value $\tfrac{1}{4}$. Nevertheless, the process $\RP^{\rm stat}(\tau)$ is not a fractional Brownian motion, since its one-point distribution is known to be non-Gaussian. The non-universal prefactors in~\eqref{eq2.10} and~\eqref{3.24} look dif\/ferent. But they have to agree because of the sum rule~\eqref{3.5}. As explained in Corollary~\ref{CorEquality}, their equivalence can also be verif\/ied directly from the def\/inition.

Our argument is on less secure grounds than in the reversible case. Firstly, the sca\-ling~\eqref{3.21} of the correlator is proved only for the TASEP. Even then, no spectral theorem in the form~\eqref{3.8} is available. But if for TASEP at density $\tfrac{1}{2}$ the current correlator $h(t)$ is assumed to be increasing, then~\eqref{3.24} holds in the limit $t \to \infty$. In Section~\ref{sec4} we display the results of Monte Carlo simulations for the TASEP at density $\tfrac{1}{2}$. They very convincingly conf\/irm $h(t) < 0$, strict increase, and $- t^{-4/3}$ asymptotics, see Figs.~\ref{FigCurrent} and~\ref{FigCurrentLogLog}. For density $\tfrac12$ the theoretically predicted parameters are $\Gamma=\sqrt{2}$ and $c_0=0.02013\ldots$.

\subsection{Non-reversible models, non-zero propagation speed}\label{sec3.3}
We f\/irst have to generalize the sum rule to a current integrated along the ray \mbox{$\{x=vt\}$}, where for notational simplicity we assume $v>0$. As a start-up this will be done for the more transparent case of a continuum stochastic f\/ield $u(x,t)$, which is stationary in time and, for each realization, satisf\/ies the conservation law
\begin{gather}\label{3.25}
\partial_t u(x,t) + \partial_x \mathcal{J}(x,t) = 0.
\end{gather}
The random current f\/ield $\mathcal{J}(x,t)$ is also space-time stationary. Without loss of generality we assume $\E (u(x,t)) = 0$, $\E (\mathcal{J}(x,t) ) = 0$. \eqref{3.25} implies that $(-u,\mathcal{J})$ is a curl-free vector f\/ield on~$\R^2$. Thus there is a potential, resp.\ height function, def\/ined by
\begin{gather}\label{3.26}
h(y,t) = \int_0^t ds \mathcal{J}(0,s) - \int_0^ydx u(x,t),
\end{gather}
where $y \geq 0$ in accordance with $v >0$. $h(y,t)$ does not depend on the choice of the integration path. In particular, one can integrate along the ray $\{x = vt\}$. Then
\begin{gather*}
h(vt,t) = \int_0^t ds \big( \mathcal{J}(vs,s) - vu(vs,s)\big).
\end{gather*}
Along the ray $\{x = vt\}$ the current is given by $s \mapsto \mathcal{J}(vs,s) - vu(vs,s)$, which is a stationary process in $s$ and integrates to $h(vt,t)$.

As before, we def\/ine $S(x,t) = \Cov(u(x,t),u(0,0))$. Then the sum rule~\eqref{3.5} generalises to
\begin{gather}\label{3.28}
\Var(h(y,t)) = \int dx|y-x|S(x,t) - \int dx |x|S(x,0),
\end{gather}
see Appendix~\ref{app1}. If $S(x,t)$ is peaked at $vt$, then the variance of the time-integrated current with end-point $(vt,t)$ ref\/lects the anomalous peak broadening.

For lattice gases the position space is discrete and one has to adjust the scheme. We denote by $J_{j,j+1}([t',t])$ the current across the bond $j,j+1$ integrated over the time-interval $[t',t]$. The height $h(y,t)$, $y \in \Z_+$, is def\/ined in analogy to~\eqref{3.26} as
\begin{gather*}
h(y,t) = J_{0,1}([0,t]) - \sum_{j=1}^y \eta_j(t).
\end{gather*}
The path from $(0,0)$ to $(0,t)$ to $(y,t)$ is deformed into a staircase with step width $1$. Then
\begin{gather*}
h(y,\tfrac{1}{v}y) = \sum_{j=1}^{y} X_j, \quad X_j = J_{j-1,j}\big(\big[\tfrac{1}{v}(j-1), \tfrac{1}{v}j\big]\big) - \eta_j\big(\tfrac{1}{v}j\big).
\end{gather*}
$\{X_j, j \in \Z\}$ is a stationary process and sums up to $h(y,\tfrac{1}{v}y)$.

The sum rule \eqref{3.28} remains valid in the form
\begin{gather*}
\Var( h(y,t)) = \sum_{j \in \mathbb{Z}}|j-y|S(j,t) - \sum_{j \in \mathbb{Z}}|j|S(j,0).
\end{gather*}
 The covariance has the scaling form
\begin{gather}\label{3.32}
S(j,t) \simeq \chi(\Gamma t)^{-2/3} f_\mathrm{KPZ}\big((\Gamma t)^{-2/3}(j -v(\rho)t)\big).
\end{gather}
Now all pieces are assembled. In the def\/inition of $X_j$ we set $v = v(\rho)$. Then the sum rule yields
\begin{gather*}
\sum_{j \in \Z}\Cov(X_0,X_j) = 0 ,
\end{gather*}
and using the scaling form \eqref{3.32} of $S(j,t)$ one arrives at
\begin{gather*}
 \Cov(X_0,X_j) \sim - |j|^{-4/3}
\end{gather*}
for large $|j|$. Then as before one concludes that
\begin{gather*}
\Cov\big(h(v(\rho) t,t),h(v(\rho) \tau t,\tau t)\big)
\simeq \tfrac12\big(1 + \tau^{2/3} - (1 -\tau)^{2/3}\big) (\Gamma t)^{2/3} \chi \int_\R dx |x|f_\mathrm{KPZ}(x)
\end{gather*}
in the scaling regime.

Considering arbitrary space-time rays provides a more complete picture of the current f\/luc\-tua\-tions than merely considering the current across the origin. There is a special direction of slo\-pe~$v(\rho)^{-1}$, along which the covariance is the same as that of fractional Brownian motion with Hurst parameter $H=\tfrac{1}{3}$. For any $v \neq v(\rho)$, the time-integrated current behaves like a Brownian motion.

\section{Numerical simulations}\label{sec4}
To have numerical support of our results we rely on Monte Carlo simulations. As for most of the theory part, we consider the TASEP at density $\tfrac{1}{2}$. From previous works~\cite{FF11} it is known already that the one-point distribution of the rescaled time-integrated current converges quite fast to the asymptotically proven GUE/GOE Tracy--Widom distributions. Thus similar good convergence is expected for the covariance and the current-current correlation.

In the f\/irst set of simulations, we consider the three initial conditions discussed in Section~\ref{sec3} and run the process until time $t_{\max}=10^4$. We measure the vector of the integrated current at the origin $J(\tau t_{\max})$ for $\tau\in\{1/100,2/100,\ldots,99/100,1\}$. We then rescale the current process as (\ref{eq3.1}) and compute numerically the covariance. To facilitate the comparison of the dif\/ferent initial conditions, we divide by the value at $\tau=1$. Therefore in the f\/igures below we plot
\begin{gather*}
\tau\mapsto\Cov(\RP^\diamond(\tau),\RP^\diamond(1))/\Var(\RP^\diamond(1)).
\end{gather*}
Since $t_{\max}=10^4$ is large but not equal to inf\/inity, we computed for comparison the same quantities for $t_{\max}=10^3$ and plotted the numbers with a red dot. For the step and periodic initial conditions we compute the numerical f\/it in the f\/irst and the last~10 of data according to the scaling exponent derived heuristically in Section~\ref{sec3}.

\subsubsection*{Step initial conditions}
For step initial conditions, the number of Monte Carlo trials is $2\times 10^6$ for $t_{\max}=10^3$ and $6\times 10^5$ for $t_{\max}=10^4$. The f\/it functions in Fig.~\ref{FigCovStep} are $\tau\mapsto 0.65 \tau^{2/3}$ and $\tau\mapsto 1-c^{\rm step}(1-\tau)^{2/3}-0.21 (1-\tau)$.
\begin{figure}[t]\centering
\includegraphics[height=5.5cm]{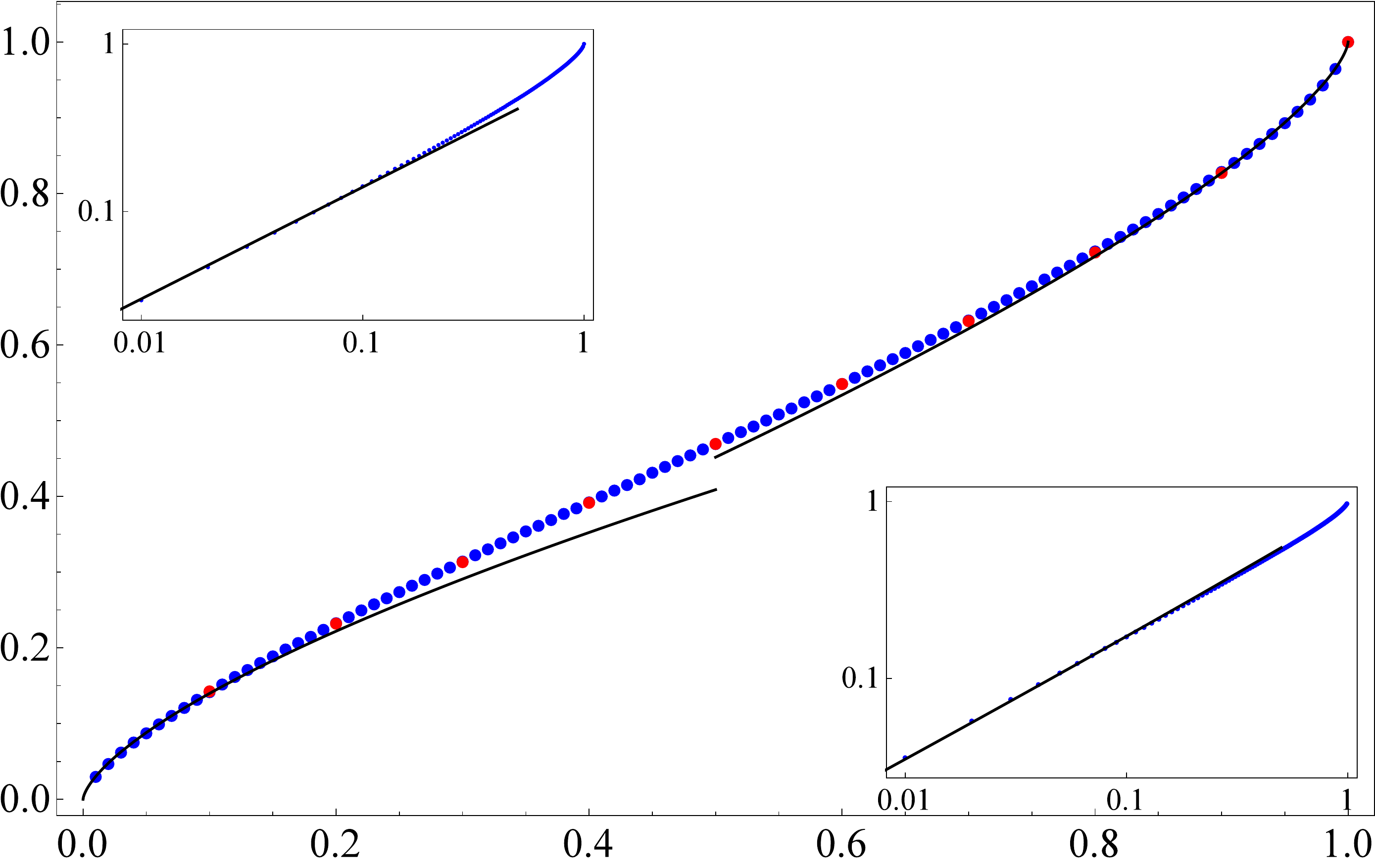}
\caption{Plot of $\tau\mapsto\Cov(\RP^{\rm step}(\tau),\RP^{\rm step}(1))/\Var(\RP^{\rm step}(1))$. The top-left (resp.\ right-bottom) inset is the log-log plot around $\tau=0$ (resp.\ $\tau=1$).}\label{FigCovStep}
\end{figure}

\subsubsection*{Periodic initial conditions}
For periodic initial conditions, the number of Monte Carlo trials is $10^6$ for \mbox{$t_{\max}=10^3$} and $4\times 10^5$ for $t_{\max}=10^4$. The f\/it functions in Fig.~\ref{FigCovFlat} are $\tau\mapsto 0.97 \tau^{4/3}$ and $\tau\mapsto 1-c^{\rm f\/lat}(1-\tau)^{2/3}-0.23 (1-\tau)$.
\begin{figure}[t]\centering
\includegraphics[height=5.5cm]{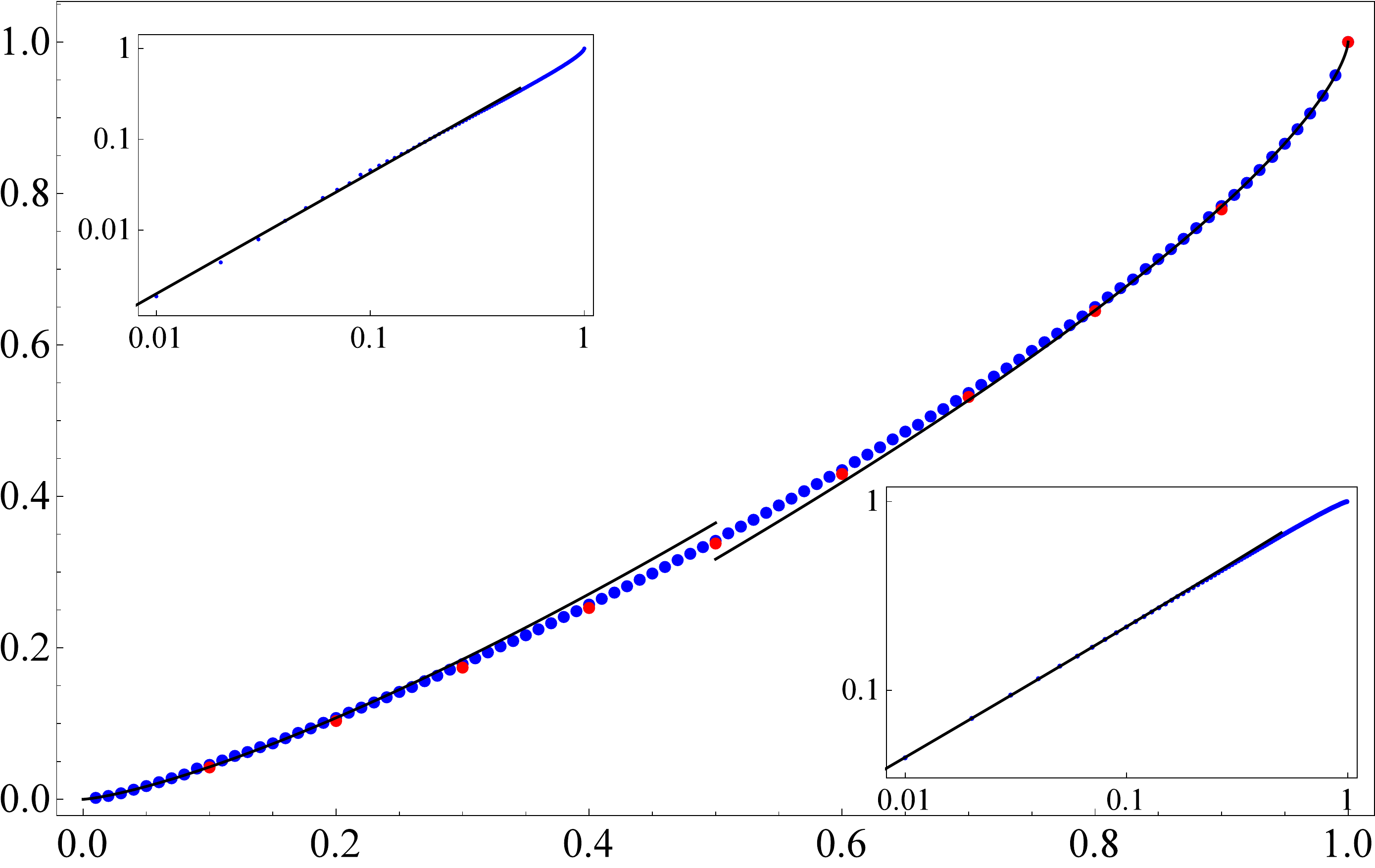}
\caption{Plot of $\tau\mapsto\Cov(\RP^{\rm f\/lat}(\tau),\RP^{\rm f\/lat}(1))/\Var(\RP^{\rm f\/lat}(1))$. The top-left (resp.\ right-bottom) inset is the log-log plot around $\tau=0$ (resp.\ $\tau=1$).}\label{FigCovFlat}
\end{figure}

\subsubsection*{Stationary initial conditions}
For periodic initial conditions, the number of Monte Carlo trials is $3\times 10^5$ for $t_{\max}=10^3$ and~$10^5$ for $t_{\max}=10^4$. The f\/it functions in Fig.~\ref{FigCovStat} is obtained from (\ref{3.24}) by normalization, namely $\tau\mapsto \frac12 (1+\tau^{2/3}-(1-\tau)^{2/3})$.
\begin{figure}[t]\centering
\includegraphics[height=5.5cm]{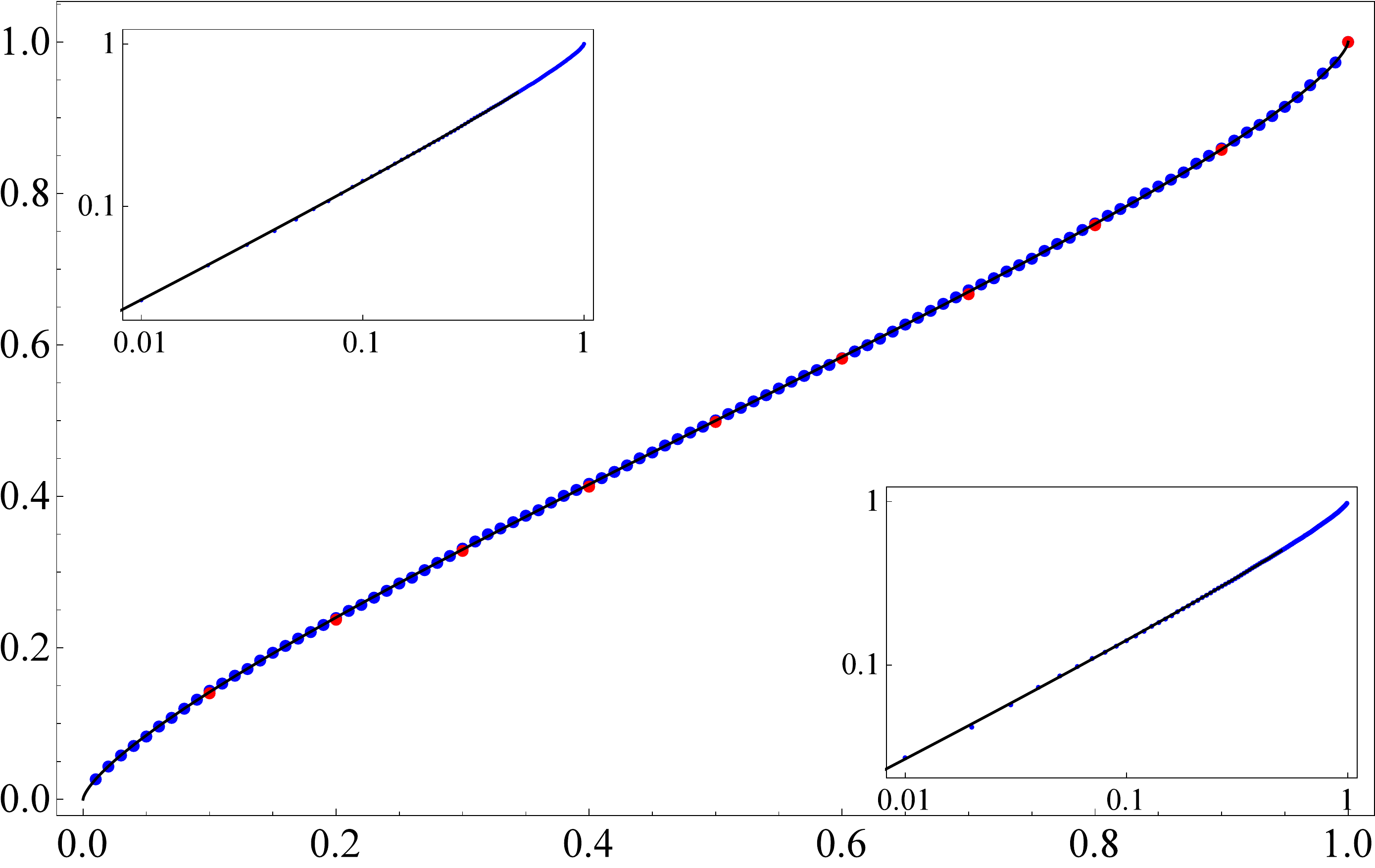}
\caption{Plot of $\tau\mapsto\Cov(\RP^{\rm stat}(\tau),\RP^{\rm stat}(1))/\Var(\RP^{\rm stat}(1))$. The top-left inset is the log-log plot around $\tau=0$ and the right-bottom inset is the log-log plot around $\tau=1$. The f\/it is made with the function $\tau\mapsto \frac12 (1+\tau^{2/3}-(1-\tau)^{2/3})$.}
\label{FigCovStat}
\end{figure}

For the stationary initial conditions, we also simulated the current-current correlations. To measure its smooth part $h(t)$, def\/ined in (\ref{3.20}), the TASEP is run up to time $t=50$ with $50\times 10^6$ Monte Carlo trials. The results are displayed in Figs.~\ref{FigCurrent} and~\ref{FigCurrentLogLog}. The predicted power law of $t^{-4/3}$, including its prefactor, is convincingly conf\/irmed.
\begin{figure}[t]\centering
\includegraphics[height=5.2cm]{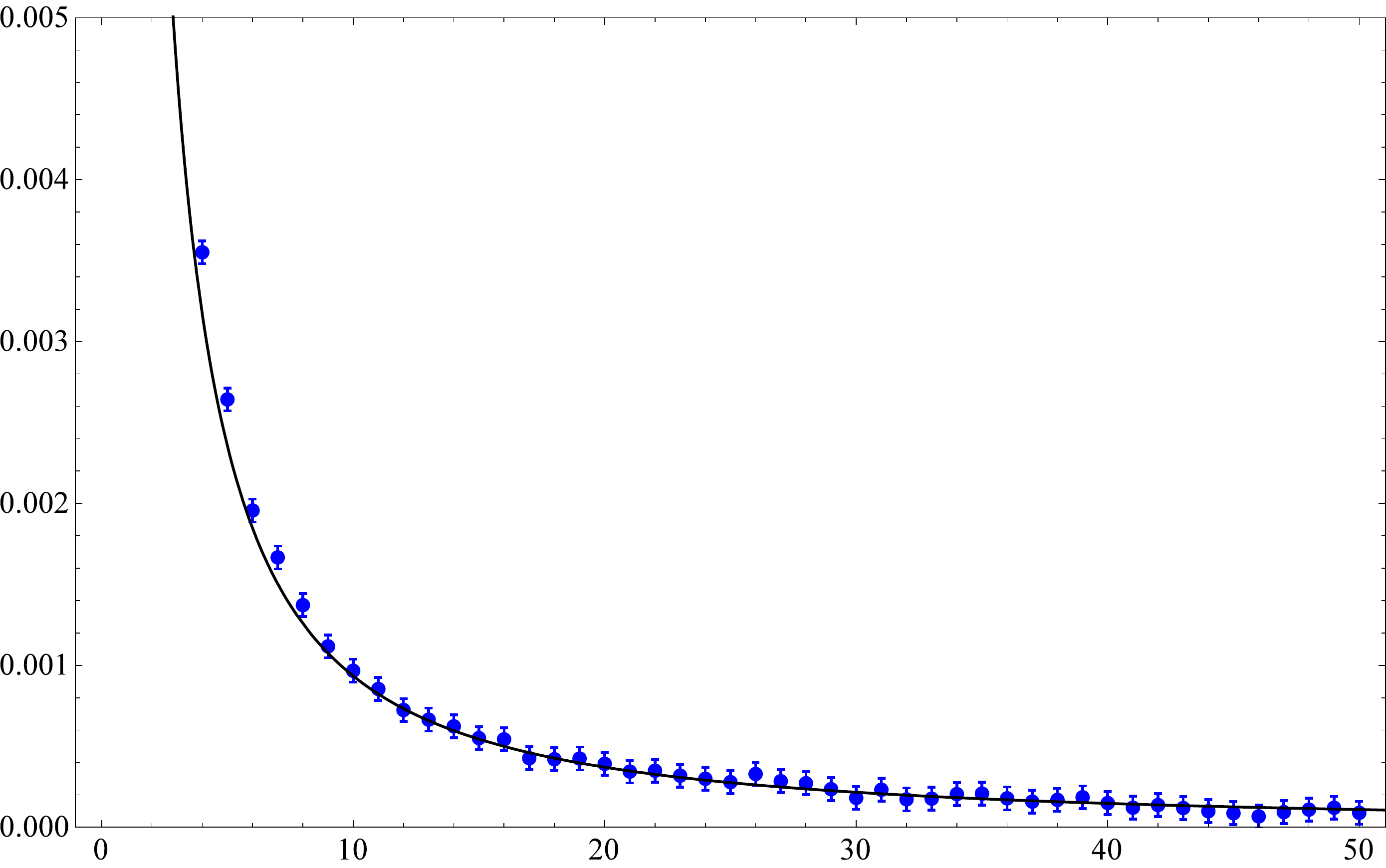}
\caption{The smooth part of the current-current correlations for TASEP. We plot $-h(t)$ and the theoretical large time behavior (\ref{3.23}), namely $0.02013 \cdot t^{-4/3}$.}\label{FigCurrent}
\end{figure}

\begin{figure}[t!]\centering
\includegraphics[height=5.5cm]{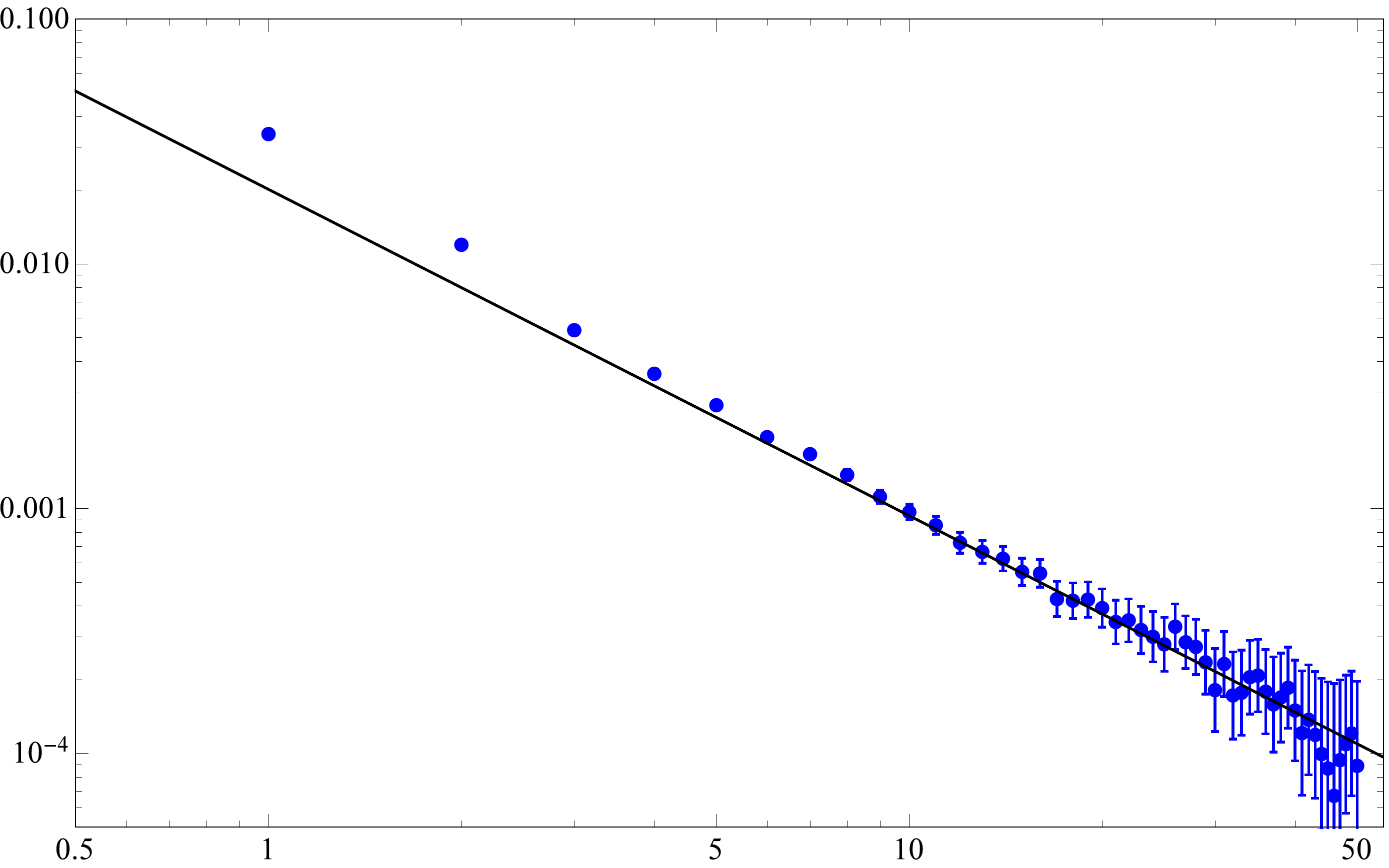}
\caption{Log-log plot of the smooth part of the current-current correlation for TASEP.}\label{FigCurrentLogLog}
\end{figure}

\appendix
\section{Scaling functions and limiting distributions}\label{app}
We recall the def\/initions of the GUE/GOE Tracy--Widom and the Baik--Rains distribution functions as well as the scaling function $f_{\rm KPZ}$ used for the two-point function.

\begin{defin}
The GUE Tracy--Widom distribution function is def\/ined by{\samepage
\begin{gather*}
F_{\rm GUE}(s) =\det(\Id-K_{2,s})_{L^2(\R_+)} =\sum_{n\geq 0}\frac{(-1)^n}{n!} \int_{\R_+} dx_1\cdots \int_{\R_+} dx_n \det\left[K_{2,s}(x_i,x_j)\right]_{1\leq i,j\leq n}
\end{gather*}
with the kernel $K_{2,s}(x,y)=\int_{\R_+} d\lambda \Ai(x+s+\lambda)\Ai(y+s+\lambda)$.}
\end{defin}

\begin{defin}
The GOE Tracy--Widom distribution function is def\/ined by{\samepage
\begin{gather*}
F_{\rm GOE}(s) =\det(\Id-K_1)_{L^2(\R_+)} =\sum_{n\geq 0}\frac{(-1)^n}{n!} \int_{\R_+} dx_1\cdots \int_{\R_+} dx_n \det\left[K_{1,s}(x_i,x_j)\right]_{1\leq i,j\leq n}
\end{gather*}
with the kernel $K_{1,s}(x,y)=\Ai(x+y+s)$.}
\end{defin}

\begin{defin}
The Baik--Rains distribution function is def\/ined by
\begin{gather*}
F_{\rm BR}(s)=\frac{\partial}{\partial s} (F_{\rm GUE}(s) g(s)),
\end{gather*}
where $g(s)$ is given as follows
\begin{gather*}
g(s)=s+\int_{\R_+^2} dx dy \Ai(x+y+s)-\int_{\R_+^2} dx dy \Phi_s(x) (\Id-P_0 K_{2,s} P_0)^{-1}(x,y)\Psi_s(y),
\end{gather*}
where $P_s$ is the projection onto $(s,\infty)$ and
\begin{gather*}
\Phi_s(x)=\int_{\R_+} dy K_{2,s}(x,y),\qquad \Psi_s(y)=1-\int_{\R_+} dx \Ai(x+y+s).
\end{gather*}
\end{defin}

\begin{defin}
The KPZ scaling function $f_{\rm KPZ}$ is def\/ined by
\begin{gather*}
f_{\rm KPZ}(w)=\frac14 \frac{\partial^2}{\partial w^2} \int_\R s^2 dF_w(s),
\end{gather*}
where
\begin{gather*}
F_w(s)=\frac{\partial}{\partial s} \big(F_{\rm GUE}\big(s+w^2\big) g\big(s+w^2,w\big)\big),
\end{gather*}
where $g(s,w)$ is given by
\begin{gather*}
g(s,w)=e^{-w^3/3}\left(\int_{\R_-}dx dy e^{w(x+y)}\Ai(x+y+s)+\int_{\R_+^2}dx dy \Phi_{w,s}(x)\rho_s(x,y)\Psi_{w,s}(y)\right).
\end{gather*}
Here, $\rho_s(x,y)=(\Id-P_0 K_{2,s} P_0)^{-1}(x,y)$, and
\begin{gather*}
\Phi_{w,s}(x)=\int_{\R_-} dz e^{w (z+s)}K_{2,s}(z,x),\qquad \Psi_{w,s}(y)=\int_{\R_-}dz e^{wz}\Ai(y+z+s).
\end{gather*}
\end{defin}
The scaling function $f_{\rm KPZ}(w)$ is even with $\int_\R dw f_{\rm KPZ}(w) |w|=0.287599\ldots$. Remark that $g(s,0)=g(s)$.

Here is another identity that allows us to compare the two formulas obtained for the stationary case.
\begin{lem}\label{LemmaStatTwoFormulas}
It holds
\begin{gather*}
\int_\R dx |x| f_{\rm KPZ}(x)=\frac12\Var(\xi_{\rm BR}).
\end{gather*}
\end{lem}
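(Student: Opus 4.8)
The plan is to express everything through the single function $g(w)=\int_\R s^2\,dF_w(s)=\E(\xi_w^2)$ (with $\xi_w\sim F_w$), for which the definition reads $f_{\rm KPZ}(w)=\tfrac14 g''(w)$. Since the stationary problem is symmetric under $w\mapsto-w$, the function $g$ is even and smooth, and $F_0=F_{\rm BR}$ gives $g(0)=\E(\xi_{\rm BR}^2)$. The first step is to record the large-$w$ behaviour in the form $g(w)=2|w|+b+o(1)$. The linear slope reflects that the spatial increments of the stationary process are exactly Brownian, ${\cal A}_{\rm stat}(w)-{\cal A}_{\rm stat}(0)\stackrel{d}{=}\sqrt2\,B(w)$, so that $\Var(\xi_w-\xi_0)=2|w|$ dominates $g$ for large $w$; in fact only the existence of a linear-plus-constant asymptotics is used below, the precise slope being immaterial.

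The second step is a double integration by parts. Using that $g''=4f_{\rm KPZ}$ is even and decays super-exponentially, the divergent pieces $x\,g'(x)$ and $g(x)$ cancel and one is left with
\begin{gather*}
\int_\R |x| f_{\rm KPZ}(x)\,dx = \tfrac14 \int_\R |x|\,g''(x)\,dx = \tfrac12\int_0^\infty x\,g''(x)\,dx = \tfrac12\big(g(0)-b\big),
\end{gather*}
where the boundary contribution at infinity produces exactly $-b=-\lim_{w\to\infty}(g(w)-2w)$ and the one at the origin produces $g(0)$. Hence the whole statement is equivalent to $g(0)-b=\Var(\xi_{\rm BR})$, i.e.\ to the claim that the additive constant in the asymptotics is $b=\E(\xi_{\rm BR})^2$.

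To analyse $b$, decompose $\xi_w=\xi_0+(\xi_w-\xi_0)$. Since the increment has mean zero and variance $2|w|$, and $\E(\xi_w)=\E(\xi_0)=\E(\xi_{\rm BR})$ for all $w$, one finds
\begin{gather*}
g(w)-2|w| = \E(\xi_{\rm BR}^2) + 2\Cov(\xi_0,\xi_w-\xi_0) = \E(\xi_{\rm BR}^2) + 2\big(\Cov({\cal A}_{\rm stat}(0),{\cal A}_{\rm stat}(w))-\Var(\xi_{\rm BR})\big).
\end{gather*}
Taking $w\to\infty$ shows that $b=\E(\xi_{\rm BR})^2$ is equivalent to the decorrelation estimate
\begin{gather*}
\lim_{w\to\infty}\Cov\big({\cal A}_{\rm stat}(0),{\cal A}_{\rm stat}(w)\big) = \tfrac12\Var(\xi_{\rm BR}),
\end{gather*}
equivalently to $\Var(\xi_w)=2|w|+o(1)$ with no constant correction to the Brownian growth.

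This last point is the real content and the main obstacle: it cannot be read off from the Brownian-increment property alone (which fixes only the slope $2$), and requires the explicit large-$w$ asymptotic analysis of the generalized Baik--Rains function $g(s,w)$ from its Airy/Fredholm representation in Appendix~\ref{app}, controlling the subleading constant of the second moment in the spirit of Pr\"ahofer--Spohn and Baik--Rains. Once that constant is pinned down, the two displays combine to give $\int_\R |x| f_{\rm KPZ}(x)\,dx=\tfrac12\big(\E(\xi_{\rm BR}^2)-\E(\xi_{\rm BR})^2\big)=\tfrac12\Var(\xi_{\rm BR})$. As an independent consistency check, the same identity is forced by equating the two expressions~\eqref{eq2.10} and~\eqref{3.24} for the stationary covariance, whose $\tau$-dependence is identical; the present argument is the direct verification from the definition that makes this agreement manifest.
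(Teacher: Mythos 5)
Your computation is correct, and at its core it is the paper's own proof with the two integrations performed in the opposite order. The paper writes $\int_\R dx\,|x|\,f_{\rm KPZ}(x)=\tfrac12\int_\R ds\,s^2\,\partial_s^2\int_0^\infty dx\,x\,\partial_x^2\big(F_{\rm GUE}(s+x^2)\,g(s+x^2,x)\big)$, integrates by parts twice in $x$ to pick up the boundary value at $x=0$, namely $F_{\rm GUE}(s)g(s,0)=F_{\rm GUE}(s)g(s)$, and then uses $\E(\xi_{\rm BR})=0$; you instead integrate in $s$ first, forming the second-moment function $g(w)=\E(\xi_w^2)$, and do the same double integration by parts in $x$. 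The genuine difference lies in what is made explicit: the paper silently discards the boundary contribution at $x=\infty$, which is legitimate precisely because $g(w)-2|w|\to 0$ with \emph{no} constant offset $b$, whereas your write-up isolates this term and correctly identifies its vanishing as equivalent to the decorrelation statement $\lim_{w\to\infty}\Cov\big({\cal A}_{\rm stat}(0),{\cal A}_{\rm stat}(w)\big)=\tfrac12\Var(\xi_{\rm BR})$ --- an illuminating probabilistic reformulation (and your observation that the exact Brownian increments fix the slope $2$ but cannot fix $b$, since Cauchy--Schwarz only bounds the cross term by $\Or\big(\sqrt{|w|}\big)$, is exactly right). So you have not introduced a gap relative to the paper: both arguments rest on the same implicit input. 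What the paper's ordering buys is a one-line landing on $F_{\rm BR}$ without ever naming $b$; what your ordering buys is the meaning of the dropped boundary term and an honest statement of what remains to be supplied, namely the known super-exponential asymptotics $g(w)=2|w|+o(1)$ of Baik--Rains and Pr\"ahofer--Spohn, which must be cited (or extracted from the Fredholm representation of $g(s,w)$) to close the argument. As written, your text is a reduction to that fact rather than a self-contained proof --- which, to be fair, matches the level of rigor of the paper's own two-line computation.
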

\begin{proof}
Using the above def\/initions and the fact that $f_{\rm KPZ}$ is an even function, we have
\begin{gather*}
\int_\R dx |x| f_{\rm KPZ}(x) = \frac12\int_0^\infty dx x \frac{\partial^2}{\partial x^2} \int_\R ds s^2 \frac{\partial^2}{\partial s^2} \big(F_{\rm GUE}\big(s+x^2\big) g\big(s+x^2,x\big)\big)\\
\hphantom{\int_\R dx |x| f_{\rm KPZ}(x)}{}
=\frac12\int_\R ds s^2 \frac{\partial^2}{\partial s^2} \int_0^\infty dx x \frac{\partial^2}{\partial x^2}\big(F_{\rm GUE}\big(s+x^2\big) g\big(s+x^2,x\big)\big)\\
\hphantom{\int_\R dx |x| f_{\rm KPZ}(x)}{}
=\frac12\int_\R ds s^2 \frac{\partial^2}{\partial s^2} F_{\rm GUE}(s) g(s,0) = \frac12 \Var(\xi_{\rm BR}),
\end{gather*}
where in the third step we use integration by parts twice and in the last step the fact that $\E(\xi_{\rm BR})=0$.
\end{proof}

\begin{cor}\label{CorEquality}
 For the stationary TASEP, the prefactors in \eqref{eq2.10} and \eqref{3.24} are identical.
\end{cor}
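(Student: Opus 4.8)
The plan is to show that the two expressions for the stationary covariance, namely formula \eqref{eq2.39} (equivalently \eqref{eq2.10}) obtained via the LPP variational problem and formula \eqref{3.24} obtained via the current-current correlation analysis, have the same non-universal prefactor. Both expressions share the identical $\tau$-dependent factor $\tfrac12(1+\tau^{2/3}-(1-\tau)^{2/3})$, so the corollary reduces entirely to matching the constant prefactors. From \eqref{eq2.10} the prefactor is $\Var(\xi_{\rm BR})$, while from \eqref{3.24} the prefactor is $(\Gamma t)^{2/3}\chi\int_\R dx\,|x|f_{\rm KPZ}(x)$ evaluated with the appropriate TASEP constants.

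First I would recall the explicit TASEP values at density $\tfrac12$: one has $\chi=\sum_j S(j,0)=\tfrac14$ (the variance of a single Bernoulli$(\tfrac12)$ occupation variable), and $\Gamma=\tfrac12\chi^2|j''(\rho)|=\sqrt{2}$ as stated in Section~\ref{sec3.2}. The subtle point is reconciling the scaling normalizations: the height-function covariance $C^{\rm stat}(\tau)$ in \eqref{eq2.10} is defined through the rescaling \eqref{eq3.1} with its explicit factor $2^{4/3}t^{-1/3}$, whereas \eqref{3.24} gives $\Cov(J(t),J(\tau t))$ in the raw current normalization. Since $h(0,t)=J(t)$ for the TASEP, I would track how the $2^{4/3}t^{-1/3}$ prefactor in \eqref{eq3.1} converts $\Cov(J(\tau t),J(t))$ into $\Cov(\RP^{\rm stat}(\tau),\RP^{\rm stat}(1))$, and check that after this conversion the $(\Gamma t)^{2/3}$ growth and the explicit constants collapse precisely onto $\Var(\xi_{\rm BR})$.

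The decisive ingredient is Lemma~\ref{LemmaStatTwoFormulas}, which supplies exactly the nontrivial analytic identity $\int_\R dx\,|x|f_{\rm KPZ}(x)=\tfrac12\Var(\xi_{\rm BR})$ connecting the KPZ scaling function integral to the Baik--Rains variance. Using this, the prefactor in \eqref{3.24} becomes $\tfrac12\Var(\xi_{\rm BR})$ times the remaining dimensional constants $(\Gamma t)^{2/3}\chi$, and the whole point is that the $t$-dependence must cancel against the $t^{-2/3}$ implicit in passing to $\RP^{\rm stat}$. I would verify that $2^{8/3}t^{-2/3}\cdot\tfrac12(\Gamma t)^{2/3}\chi\cdot\tfrac12\Var(\xi_{\rm BR})$ simplifies to $\Var(\xi_{\rm BR})$ with $\Gamma=\sqrt2$, $\chi=\tfrac14$, which is a direct arithmetic check: $2^{8/3}\cdot\tfrac12\cdot\sqrt2^{\,2/3}\cdot\tfrac14=2^{8/3}\cdot 2^{-1}\cdot 2^{1/3}\cdot 2^{-2}=2^{8/3+1/3-3}=2^0=1$.

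The main obstacle I anticipate is not the final arithmetic but bookkeeping the scaling conventions consistently: one must be careful that the definition \eqref{eq3.1} of $\RP^{\rm stat}$ already incorporates the subtraction of the mean $\tfrac14\tau t$ and the rescaling factor $2^{4/3}t^{-1/3}$, and that the $\Gamma$ appearing in \eqref{3.21}--\eqref{3.24} is defined in the same units as the KPZ scaling form for $S(j,t)$. Once these conventions are aligned and Lemma~\ref{LemmaStatTwoFormulas} is invoked, the equality of prefactors follows, completing the proof.
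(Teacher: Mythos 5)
Your proposal is correct and follows essentially the same route as the paper's own proof: rescale $\Cov(J(\tau t),J(t))$ from \eqref{3.24} by the factor $2^{8/3}t^{-2/3}$ coming from \eqref{eq3.1}, insert the TASEP values $\Gamma=\sqrt{2}$, $\chi=\tfrac14$, and invoke Lemma~\ref{LemmaStatTwoFormulas} to collapse $2^{8/3}\Gamma^{2/3}\chi\int_\R dx\,|x|f_{\rm KPZ}(x)$ to $\Var(\xi_{\rm BR})$, which is exactly the paper's computation. One bookkeeping nit: your displayed expression $2^{8/3}t^{-2/3}\cdot\tfrac12(\Gamma t)^{2/3}\chi\cdot\tfrac12\Var(\xi_{\rm BR})$ actually equals $\tfrac12\Var(\xi_{\rm BR})$, i.e.\ the coefficient of $\big(1+\tau^{2/3}-(1-\tau)^{2/3}\big)$ in \eqref{eq2.10}, not $\Var(\xi_{\rm BR})$; the factor $\tfrac12$ multiplying the $\tau$-dependence is common to both formulas, so it should be kept on both sides of the comparison or dropped from both, which is what your final arithmetic in effect does.
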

\begin{proof}
For TASEP with stationary initial conditions and density $1/2$, the parameters in (\ref{3.24}) are $\chi=1/4$ and $\Gamma=\sqrt{2}$. Rescaling $h(0,t)=J(t)$ as in (\ref{eq3.1}) we get from (\ref{3.24})
\begin{gather*}
\Cov\big(\RP^{\rm stat}(\tau),\RP^{\rm stat}(1)\big) =\frac12\big(1+\tau^{2/3}-(1-\tau)^{2/3}\big) 2^{8/3}\Gamma^{2/3}\chi \int_\R dx |x| f_{\rm KPZ}(x)\\
\hphantom{\Cov\big(\RP^{\rm stat}(\tau),\RP^{\rm stat}(1)\big)}{} =\frac12\big(1+\tau^{2/3}-(1-\tau)^{2/3}\big) \Var(\xi_{\rm BR}) = (\ref{eq2.10}),
\end{gather*}
where we used Lemma~\ref{LemmaStatTwoFormulas} in the second equality.
\end{proof}

\section{Sum rule and current-current correlations}\label{app1}
\subsection{The sum rule} We prove that
\begin{gather*}
\Var( h(y,t)) = \sum_{j \in \mathbb{Z}}|j-y|S(j,t) - \sum_{j \in \mathbb{Z}}|j|S(j,0).
\end{gather*}
We use the def\/inition \eqref{3.26}. Expanding out the square yields the terms (I), (II), and (III). Let us introduce the short hand
$\sum\limits_{j \in \Z} g_j \eta_j(t) = \eta(g,t)$ and correspondingly $\sum\limits_{j \in \Z} f_j J_{j,j+1}(t) = J(f,t)$. By the conservation law,{\samepage
\begin{gather*}
\Cov(\eta(f,t),\eta(g,t)) = \Cov\big(J\big(\partial^\mathrm{T}\partial f,t\big), J(g,t)\big)
\end{gather*}
with $(\partial f)_j = f_{j+1} -f_j$.}

Choosing $g_j = \delta_{0j}$ and $f_j = |j|$, hence $ (\partial^\mathrm{T}\partial f)_j= -2\delta_{0j}$, one arrives at
\begin{gather*}
\mathrm{(I)} = \Var(J_{0,1}(t)) = \tfrac{1}{2} \sum_{j \in \Z} |j|\big(S(j,t) + S(-j,t) - 2 S(j,0)\big),
\end{gather*}
where we used stationarity in $j$. Next we consider the cross term starting from
\begin{gather*}
 \Cov(\eta(f,t),\eta(g,0)) = - \Cov\big( J(f,t),\eta\big(\partial^\mathrm{T} g,0\big)\big).
\end{gather*}
Choosing $f_j = \delta_{jy}$ and $-\partial^\mathrm{T} g$ as the indicator function of $[1,\dots,y]$ yields
\begin{gather*}
\mathrm{(II)} = - 2 \Cov (J_{y,y+1}(t),\eta(\partial^\mathrm{T} g,0)) = 2 \sum_{j \in \Z}g_j\big(S(y-j,t) - S(y-j,0)\big).
\end{gather*}
Finally
\begin{gather*}
\mathrm{(III)} = \sum_{j=1}^y \sum_{i=1}^yS(j-i,0).
\end{gather*}
Summing all three terms establishes the claim.

\subsection{Current-current correlation}
One can think of $\mathsf{j}_{j,j+1}(t)$ as a point process with weights $\pm1$. Then the covariance has a self-part, proportional to $\delta(t - t')$, and a continuous part. Such a decomposition holds also for current correlations. We f\/irst consider the self-part and introduce the short hands $q_j = \eta_j(1- \eta_{j+1})$, $\bar{q}_j = (1- \eta_j)\eta_{j+1}$, $q_j(\eta(t)) = q_j(t)$, $J_{j,j+1}([s,t]) = J_{j,j+1}(t)-J_{j,j+1}(s)$ for $0\leq s < t$. Note that $q_j - \bar{q}_j = \eta_j - \eta_{j+1}$. Then
\begin{gather*}
\lim_{\delta \to 0} \delta^{-1}
\E \big(J_{i,i+1}([t,t+\delta])J_{j,j+1}([t,t+\delta])\big) \\
\qquad{} = \lim_{\delta \to 0} \delta^{-1} \E \big((q_i(t) \bar{q}_i(t+ \delta) - \bar{q}_i(t)q_i(t+ \delta))(q_j(t) \bar{q}_j(t+ \delta) - \bar{q}_j(t)q_j(t+ \delta))\big)\\
\qquad{} = \langle q_iq_jL \bar{q}_i\bar{q}_j +\bar{q}_i\bar{q}_j L q_iq_j- q_i\bar{q}_jL \bar{q}_iq_j-
\bar{q}_iq_jL q_i\bar{q}_j\rangle_\rho = \delta_{i,j} \langle c_{j,j+1} \rangle_\rho.
\end{gather*}
By the same method the continuous part is obtained as
\begin{gather*}
\lim_{\delta \to 0} \delta^{-2}\E \big(J_{i,i+1}([s,s+\delta]J_{j,j+1}([t,t+\delta])\big)\\
\qquad{} =\lim_{\delta \to 0} \delta^{-1} \E \big((q_i(s) \bar{q}_i(s+ \delta) - \bar{q}_i(s)q_i(s+ \delta)) \mathrm{e}^{L(t - s -\delta)}r_{j,j+1}(\eta(s+\delta))\big) \\
\qquad{} = \big\langle(\eta_i - \eta_{i+1})c_{i,i+1}(\mathrm{e}^{L(t - s)}r_{j,j+1})(\eta^{j,j+1})\big\rangle_\rho
= -\big\langle r^\mathrm{R}_{i,i+1}\mathrm{e}^{L(t - s)}r_{j,j+1}\big\rangle_\rho.
\end{gather*}
Here $r^\mathrm{R}_{i,i+1}$ is the reverse current def\/ined by
\begin{gather*}
r^\mathrm{R}_{j,j+1}(\eta) = \frac{\mu_s(\eta^{j,j+1})}{\mu_s(\eta)} c_{j,j+1}\big(\eta^{j,j+1}\big)(\eta_j - \eta_{j+1}).
\end{gather*}
Hence
\begin{gather*}
\E \big(\mathsf{j}_{i,i+1}(s)\mathsf{j}_{j,j+1}(t)\big) = \langle c_{j,j+1}\rangle_\rho\delta _{i,j}\delta(s-t)
- \big\langle r^\mathrm{R}_{i,i+1}\mathrm{e}^{L(t - s)}r_{j,j+1}\big\rangle_\rho.
\end{gather*}
In particular,
\begin{gather*}
\Cov\big(\mathsf{j}(s),\mathsf{j}(t)\big)=\E \big(\mathsf{j}(s) \mathsf{j}(t)\big) - j(\rho)^2 = \langle c_{0,1}\rangle_\rho\delta(s-t)+ h(s-t),
\end{gather*}
where
\begin{gather*}
h(t) = -\big\langle \big(r^\mathrm{R}_{0,1} -j(\rho)\big)\mathrm{e}^{L|t|}(r_{0,1}- j(\rho))\big\rangle_\rho.
\end{gather*}

\subsection*{Acknowledgements}
The work of P.L.~Ferrari is supported by the German Research Foundation via the SFB 1060--B04 project. The f\/inal version of our contribution was written when both of us visited in early 2016 the Kavli Institute of Theoretical Physics at Santa Barbara. The research stay of H.~Spohn at KITP is supported by the Simons Foundation. This research was supported in part by the National Science Foundation under Grant No.~NSF PHY11-25915. We thank Kazumasa Takeuchi for illuminating discussions on the comparison with his experimental results and Joachim Krug for explaining to us earlier work on time correlations.

\pdfbookmark[1]{References}{ref}
\LastPageEnding

\end{document}